\newtheorem{thm}{Theorem}
\newtheorem{lem}[thm]{Lemma}
\newtheorem{clm}[thm]{Claim}
\theoremstyle{definition}
\newtheorem{dfn}[thm]{Definition}
\newcommand{\classfont}{\mathrm}
\newcommand{\AC}{\classfont{AC}}
\newcommand{\load}[1]{\langle\mathrm{loader}_{#1}\rangle}
\newcommand{\dist}[1]{\langle\mathrm{distributor}_{#1}\rangle}
\newcommand{\ver}[1]{\langle\mathrm{verifier}_{#1}\rangle}
\def\vdashstar{{\vdash^*}}
\def\vdashstarsmall{{\hbox to 0em{$\vdashstar$}\phantom{\vdash}}}
\def\pprime{{\prime\prime}}
\title{Unshuffling a Square is NP-Hard
\\
\small Preliminary version --- comments appreciated}
\author{Sam Buss\thanks{Supported
in part by NSF grant DMS-1101228.}
\\
\small Department of Mathematics \\
\small University of California, San Diego\\
\small La Jolla, CA 92093-0112, USA\\
\small \tt sbuss@math.ucsd.edu
\and
Michael Soltys\thanks{Supported in part
by an NSERC Discovery Grant.
This project was carried out
while the second author was visiting UCSD in Fall 2012.}
\\
\small Department of Computing \& Software \\
\small McMaster University \\
\small Hamilton, Ontario L8S 4K1, Canada\\
\small \tt soltys@mcmaster.ca
}
\date{\today}
\begin{document}
\maketitle
\begin{abstract}
A shuffle of two strings is formed by interleaving the characters
into a new string, keeping the characters of each string in order. A
string is a {\em square} if it is a shuffle of two identical strings.
There is a known polynomial time dynamic programming algorithm
to determine if a
given string $z$ is the shuffle of two given strings $x,y$; however,
it has been an open question whether there is a polynomial time algorithm
to determine if a given string~$z$ is a square.  We resolve
this by proving that this problem is NP-complete
via a many-one reduction
from 3-{\sc Partition}.
\end{abstract}

\section{Introduction}\label{sec:intro}

If $u$, $v$, and~$w$ are strings over an alphabet~$\Sigma$, then
$w$ is a {\em shuffle} of $u$ and~$v$ provided there are (possibly
empty) strings $x_i$ and $y_i$ such that $u = x_1x_2\cdots x_k$
and $v = y_1y_2\cdots y_k$ and $w = x_1y_1x_2y_2\cdots x_ky_k$.
A shuffle is sometimes instead called a ``merge'' or an ``interleaving''.
The intuition for the definition is that $w$ can be obtained from
$u$ and~$v$ by an operation similar to shuffling two decks of
cards.
We use $w=u\odot v$ to denote that $w$~is a shuffle
of $u$ and~$v$; note, however, that in spite of the
notation there can be many
different shuffles~$w$ of $u$ and~$v$.  The string~$w$ is
called a {\em square} provided it is equal to a shuffle
of a string~$u$ with itself, namely provided $w= u\odot u$ for
some string~$u$.  This paper proves that the set of squares is
NP-complete; this is true even for (sufficiently large) finite
alphabets.

The initial work on shuffles arose out of abstract
formal languages, and shuffles were motivated later by applications
to modeling sequential
execution of concurrent processes.  To the best of
our knowledge, the shuffle
operation was first used in formal languages by
Ginsburg and Spanier~\cite{GinsburgSpanier:TwoTapeDevices}.
Early research with applications to concurrent processes
can be found in
Riddle~\cite{Riddle:1973,Riddle:1978} and
Shaw~\cite{Shaw:FlowExpressions}.
Subsequently,
a number of authors,
including
\cite{gischer-1981,%
GruberHolzer:regular,%
jantzen-1980,%
jantzen-1984,%
Jedrzejowicz:StructuralShuffle,%
JedrzejowiczSzepietowski:ShuffleInP,%
JedrzejowiczSzepietowski:ShuffleRegular,%
MayerStockmeyer:Interleaving,%
ORR:concurrency,%
Shoudai:1992}
have studied various aspects of
the complexity of the shuffle and iterated shuffle
operations in conjunction with regular expression
operations and other constructions from
the theory of programming languages.

In the early 1980's,
Mansfield~\cite{Mansfield:MergeAlgorithm,Mansfield:MergeComplexity}
and Warmuth and Haussler~\cite{WarmuthHaussler:IteratedShuffle}
studied the computational complexity of the shuffle operator
on its own.
The paper~\cite{Mansfield:MergeAlgorithm} gave a polynomial
time dynamic programming
algorithm for deciding the following shuffle
problem: Given inputs $u, v, w$, can $w$ be expressed as
a shuffle of $u$ and~$v$, that is, does $w = u\odot v$?
In~\cite{Mansfield:MergeComplexity}, this was extended to give
polynomial time algorithms for deciding whether a string~$w$ can
be written as the shuffle of $k$~strings $u_1,\ldots, u_k$,
so that $w = u_1 \odot u_2 \odot \cdots \odot u_k$,
for a {\em constant} integer~$k$.  The paper~\cite{Mansfield:MergeComplexity}
further proved that if $k$ is allowed to vary, then the
problem becomes NP-complete (via a reduction from
{\sc Exact Cover with 3-Sets}).
Warmuth and Haussler~\cite{WarmuthHaussler:IteratedShuffle} gave
an independent proof of this last result and went on to
give a rather striking improvement by showing that this problem
remains NP-complete even if the $k$ strings $u_1,\ldots,u_k$
are equal.  That is to say, the question of,
given strings $u$ and~$w$, whether $w$ is equal to
an {\em iterated shuffle} $u\odot u \odot \cdots \odot u$
of~$u$ is NP-complete.
Their proof used a reduction from {\sc 3-Partition}.

The second author~\cite{Soltys:Shuffle} has recently proved that the
problem of
whether $w = u\odot v$ is in $\AC^1$, but not in~$\AC^0$.
Recall that $\AC^0$ (resp., $\AC^1$) is the class of problems recognizable
with constant-depth (resp., logarithmic depth) Boolean circuits.

As mentioned above, a string~$w$ is defined
to be a {\em square} if it can be written $w = u\odot u$
for some~$u$.
Erickson~\cite{Erickson:UnshufflingStackExchange} in 2010,
asked on the {\em Stack Exchange} discussion board
about the computational complexity
of recognizing squares, and in particular
whether this is polynomial
time decidable.   This problem was repeated
as an open question in~\cite{HRS:ShufflingUnshuffling}.
An online
reply to~\cite{Erickson:UnshufflingStackExchange}
by Per Austrin showed that
the problem of recognizing squares
is polynomial time decidable provided that
each alphabet symbol occurs at most four times in~$w$ (by a
reduction from {\sc 2-Sat}); however, the general question has
remained open.  The present paper
resolves this by proving that
the problem of recognizing squares is NP-complete, even over a
sufficiently large fixed alphabet.

The NP-completeness proof uses
a many-one reduction from
the strongly NP-complete problem
{\sc 3-Partition} (see~\cite{GareyJohnson:NPcompleteness}).
{\sc 3-Partition} is
defined as follows: The input is a sequence of natural numbers
$S=\langle n_i:1\le i\le 3m\rangle$ such that
$B=(\sum_{i=1}^{3m}n_i)/m$ is an integer and
$B/4<n_i<B/2$ for each $i\in[3m]$.
The question is: can $S$ be partitioned into $m$
disjoint subsequences $S_1,\ldots,S_m$ such that each
$S_k$ has exactly three elements with the sum
of the three members of~$S_k$ equal to~$B$?
Since {\sc 3-Partition} is {\em strongly} NP-complete, it remains
NP-complete even if the integers~$n_i$ are presented in
unary notation.

\section{Mathematical preliminaries}\label{sec:prelim}

Let $w$ be a string of symbols over the alphabet~$\Sigma$ with
$w = w_1\cdots w_n$ for $w_i\in\Sigma$, so $n = |w|$.
A string~$u$ is a {\em subword} of~$w$ if $w = v_1 u v_2$
for some strings $v_1,v_2$.  A string~$u^\prime$ is
a {\em subsequence} of~$w$ if $w = u^\prime\odot v$ for some string~$v$.
Both the subword~$u$ and the subsequence~$u^\prime$ contain symbols
selected in increasing order from~$w$; the symbols of~$u$
must appear consecutively in~$w$ but this is not required
for~$u^\prime$.  The exponential notation $u^i$, for $i\ge 0$,
indicates the word obtained by concatenating $i$ copies of~$u$.
If $u_1,\ldots,u_k$ are strings, the product notation
$\prod_{\ell=1}^k u_\ell$ indicates the concatenation
$u_1 u_2 \cdots u_{k-1} u_k$.

Now suppose that $w$ is a square.
Figure~\ref{fig:matchingexample} gives an example
of how a square shuffle $w = u \odot u$ gives rise
to a bipartite graph~$G$ on the symbols of~$w$.
The graph~$G$ is defined based on
a particular computation of $w$ as a
shuffle $u\odot u$ as obtained by shuffling
two copies of~$u$.\footnote{In general, there may be
several such ways to express $w$ as a square shuffle,
even for the same~$u$.}
The vertices of~$G$ are the symbols $w_1,\ldots,w_n$
of~$w$, and, for each~$i$, $G$ contains an edge joining
the symbol of~$w$
corresponding to the $i$-th
symbol of one copy of~$u$ to the
symbol of~$w$ corresponding
to the $i$-th symbol of the other copy of~$u$.
W.l.o.g., if $G$ contains an edge joining
$w_j$ and~$w_k$ with $j<k$, then $w_j$~corresponds
to a symbol in the first copy of~$u$,
and $w_k$~corresponds to a symbol in the
second copy of~$u$.  This can be done without loss
of generality, possibly by changing the order in
which the symbols of the $u$'s are shuffled out
to form~$w$.  (So we could instead define $G$ as
a {\em directed} graph if we wished.)

\begin{figure}[t]
\begin{center}
\psset{unit=1.3pt}
\begin{pspicture}(6,0)(230,20)
\rput(0,0){$c_1$}
\rput(10,0){$x\vphantom{c_1}$}
\rput(20,0){$x\vphantom{c_1}$}
\rput(30,0){$x\vphantom{c_1}$}
\rput(40,0){$c_2$}
\rput(50,0){$c_1$}
\rput(60,0){$x\vphantom{c_1}$}
\rput(70,0){$x\vphantom{c_1}$}
\rput(80,0){$x\vphantom{c_1}$}
\rput(90,0){$c_2$}
\rput(100,0){$c_1$}
\rput(110,0){$x\vphantom{c_1}$}
\rput(120,0){$x\vphantom{c_1}$}
\rput(130,0){$c_2$}
\rput(140,0){$c_1$}
\rput(150,0){$x\vphantom{c_1}$}
\rput(160,0){$x\vphantom{c_1}$}
\rput(170,0){$c_2$}
\rput(180,0){$c_1$}
\rput(190,0){$x\vphantom{c_1}$}
\rput(200,0){$c_2$}
\rput(210,0){$c_1$}
\rput(220,0){$x\vphantom{c_1}$}
\rput(230,0){$c_2$}
\psbezier(0,5)(0,15)(50,15)(50,5)  
\psbezier(40,5)(40,35)(130,35)(130,5)  
\psbezier(90,5)(90,35)(170,35)(170,5) 
\psbezier(100,5)(100,35)(180,35)(180,5) 
\psbezier(140,5)(140,35)(210,35)(210,5) 
\psbezier(200,5)(200,15)(230,15)(230,5) 
\psbezier(10,5)(10,25)(60,25)(60,5)
\psbezier(20,5)(20,25)(70,25)(70,5)
\psbezier(30,5)(30,20)(110,20)(110,5)
\psbezier(80,5)(80,20)(150,20)(150,5)
\psbezier(120,5)(120,20)(190,20)(190,5)
\psbezier(160,5)(160,20)(220,20)(220,5)
\end{pspicture}
\end{center}
\caption{Let $w$ be the string $(c_1x^3c_2)^2(c_1 x^2 c_2)^2 (c_1 x c_2)^2$.
This figure shows the bipartite graph~$G$ associated
with the square shuffle $w = u\odot u$ with $u$ equal
to $c_1 x x x c_2 x c_2 c_1 x c_1 x c_2 $.
It is not pictured, but we also have $w = v \odot v$
with $v = c_1x^3c_2c_1 x^2 c_2c_1 x c_2$.
\newline
\hspace*{1.5em}The string~$w$ can be expressed
in product notation as
$\prod_{k=0}^2(c_1x^{3-k}c_2)^2$.}
\label{fig:matchingexample}
\end{figure}
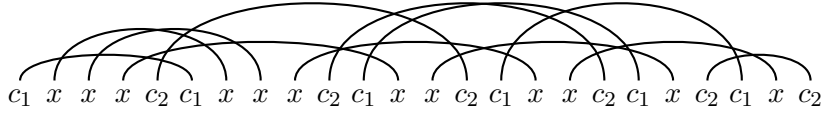

The bipartite graph~$G$ has a special ``non-nesting'' property:
if $G$ contains an edge from $w_k$ to~$w_\ell$ and an edge
from $w_p$ to~$w_q$, then it is not the case that
$k<p<q<\ell$.  This is because there are
indices $i$ and~$i^\prime$ such that
$w_k$ and~$w_\ell$
correspond to the $i$-th symbols of the first and second
copies of~$u$, and such that $w_p$ and~$w_q$ correspond to the
$i^\prime$-th symbols of the two copies of~$u$. (Compare to
Figure~\ref{fig:nesting}.)  But then
$k<p$ implies $i<i^\prime$ whereas $q<\ell$ implies that
$i^\prime<i$, and this is a contradiction.

\begin{figure}[t]
\begin{center}
\psset{unit=1.3pt}
\begin{pspicture}(0,6)(60,20)
\rput(0,0){$a$}
\rput(20,0){$b$}
\rput(40,0){$a$}
\rput(60,0){$b$}
\psbezier(0,5)(0,15)(40,15)(40,5)  
\psbezier(20,5)(20,15)(60,15)(60,5)  
\end{pspicture}
\hspace*{0.7in}
\begin{pspicture}(0,6)(60,20)
\rput(0,0){$a$}
\rput(20,0){$b$}
\rput(40,0){$b$}
\rput(60,0){$a$}
\psbezier(0,5)(0,20)(60,20)(60,5)  
\psbezier(20,5)(20,10)(40,10)(40,5)  
\end{pspicture}
\end{center}
\caption{Examples of two crossing (and hence non-nested) edges
for a graph on $abab$, and two nested edges for a graph
on $abba$.  Nested edges
cannot appear in a graph obtained from a shuffle.}
\label{fig:nesting}
\end{figure}
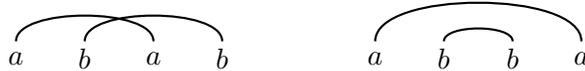

In fact, as is easy to prove, if
there is a complete bipartite graph~$G$
of degree one (i.e., a perfect matching)
on the symbols of~$w$
which is non-nesting, then $w$ can be expressed as a square shuffle
$w = u\odot u$ so that $G$~is the bipartite graph associated
with this shuffle.

The non-nesting property for~$G$ can also be
viewed as an ``anti-Monge'' condition, namely
as the opposite of the Monge condition.
A bipartite graph on the symbols
of the string~$w$ is said to satisfy the {\em Monge condition}
provided that, instead of having the non-nesting condition, it is
prohibited that $k<p<\ell<q$.  In other words, the
Monge condition allows nested edges but prohibits crossing
edges.
The Monge condition has been widely studied
for matching problems and transportation problems.
Many problems that satisfy the Monge condition
or the ``quasi-convex'' condition
are known to have efficient polynomial time
algorithms; for these
see~\cite{BussYianilos:quasiconvex} and the
references cited therein.  There are fewer algorithms
known for problems that satisfy the anti-Monge property,
and some special cases are known to be NP-hard~\cite{BCRW:AntiMonge}.
This is another reason why we find the NP-completeness
of the square problem to be interesting: it provides a
hardness result for anti-Monge matching in a very
simple and abstract situation.

The set of squares~$w$ is accepted by
the following finite-state queue automaton.
A queue automaton is defined similarly to
a PDA but
with a queue instead of a stack.
As usual, the automaton reads the
input~$w$ from left to right.  The automaton's queue
is initially empty and supports the operations
push-right (enqueue) and pop-left (dequeue).
The automaton accepts if its queue is empty
after the last symbol of~$w$ has been read.  The
non-deterministic algorithm for the automaton is as follows:

\vbox{
\begin{tabbing}
\hspace*{1em}\=Repeatedly do one of the following:
\\
\>\hspace*{1em}\=a.~Read the next input symbol~$\sigma$
and push it onto the queue, or \\
\>\>b. \=If the next input symbol~$\sigma$ is the same
as the symbol at the top
of \\
\>\>\>the queue, read past the input symbol~$\sigma$
and pop $\sigma$ from the queue.
\end{tabbing}
}
When either step a.\ or~b.\ is performed, we say that the
input symbol~$\sigma$ has been {\em consumed}.  In case~b.,
we say that the symbol~$\sigma$ on the queue has been
{\em matched} by the input symbol. Note that a.\ is always
allowed, and~b.\ only when the symbols match.

A {\em configuration} of the automaton is a ``snapshot''
of the
computation, and consists of the queue contents~$Q$
and the remaining part~$x$ of the input to be read.
A configuration is denoted $Q\|x$.  A single step
from configuration~$C$ to configuration~$C^\prime$ is denoted
$C\vdash C^\prime$.  A sequence of zero or more steps is
denoted $C\vdashstar C^\prime$.  The condition
$C\vdash C^\prime$ can hold in one of two ways:
if $C$ is $Q \| \sigma x$, then either (a)~$C^\prime$
is $Q\sigma \| x$, or (b)~$C^\prime$ is $Q^\prime \| x$
where $Q = \sigma Q^\prime$.  The input~$w$ is {\em accepted}
if $\varepsilon \| w \,\vdashstar \varepsilon \| \varepsilon$,
where $\varepsilon$ is the empty string.
More generally, a configuration~$C$ is {\em accepted}
provided $C \vdashstar \varepsilon \| \varepsilon$.

If a computation proceeds as
\begin{equation}\label{eq:consume}
u_1 u_2 u_3 \| x_1 x_2 x_3 ~\vdashstar~
    u_2 u_3 z_1 \| x_2 x_3 ~\vdashstar~
    u_3 z_1 z_2 \| x_3,
\end{equation}
then we say that the subword~$x_2$ of the input is
{\em consumed} by the subword~$u_2$ of the queue.  This means
that the symbols of~$x_2$ are either matched
against symbols from~$u_2$, or are pushed onto the
queue only after all the symbols
of~$u_1$ have been popped and before any symbol of~$u_3$ is
popped.  In addition, no symbol of $x_1$ or~$x_3$ is matched
against a symbol from~$u_2$.  The
word~$z_2$ which is pushed onto the stack while $x_2$ is
consumed by~$u_2$ is called the {\em resultant}.\footnote{Note
that in (\ref{eq:consume}) also $x_1$ is consumed by~$u_1$ with
resultant~$z_1$.}
The following two simple lemmas, which will be used in the next
section, illustrate these concepts.
\begin{lem}\label{lem:consumesubsequence}
If $x_2$ is consumed by~$u_2$ yielding the resultant~$z_2$,
then $u_2$ and~$z_2$ are subsequences of~$x_2$.  Furthermore,
$x_2 = u_2 \odot z_2$.
\end{lem}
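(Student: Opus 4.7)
The plan is to track, step-by-step, what happens to each symbol of $x_2$ during the sub-computation that carries the queue from $u_2 u_3 z_1$ to $u_3 z_1 z_2$. By the definition of ``consumed'' in equation~(\ref{eq:consume}), during this sub-computation no symbol of $u_1$ or of $u_3 z_1$ is popped and no symbol of $x_1$ or $x_3$ is read; every read symbol belongs to $x_2$, every popped symbol belongs to $u_2$, and every pushed symbol lands at the right end of the queue (so it sits after $u_3 z_1$ and forms, in order, a prefix of $z_2$).

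First I would label the read symbols of $x_2$ in the order in which they are consumed. Each such symbol is handled by rule~(a) or by rule~(b). If I let $A$ denote the subsequence of $x_2$ consisting of the symbols handled by rule~(b), read in the order they occur in $x_2$, and let $B$ denote the subsequence consisting of the symbols handled by rule~(a) in that same order, then $x_2 = A \odot B$ simply because $A$ and $B$ partition the positions of $x_2$ while preserving order.

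Next I would identify $A$ with $u_2$ and $B$ with $z_2$. For $A = u_2$: during this phase only symbols of $u_2$ are popped, and they must be popped from left to right because a queue dequeues from its left end; rule~(b) requires the consumed input symbol to equal the popped symbol, so the $i$-th rule-(b) event removes the $i$-th symbol of $u_2$ and that input symbol equals it. Since the queue ends the phase with $u_2$ entirely gone, $A$ equals $u_2$ symbol-by-symbol. For $B = z_2$: the pushed symbols are precisely those handled by rule~(a), and they are appended to the queue in the order of their occurrence in $x_2$; by the definition of the resultant, the concatenation of these pushed symbols is exactly $z_2$.

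Combining, $x_2 = u_2 \odot z_2$, and in particular $u_2$ and $z_2$ are subsequences of $x_2$. The main thing to be careful about is justifying that only $u_2$ symbols are popped (not $u_3$ or $z_1$) and that pushed symbols go strictly to the right of $u_3 z_1$; both follow directly from the FIFO behavior of the queue and from the endpoint configurations displayed in~(\ref{eq:consume}), so no substantive obstacle arises beyond bookkeeping.
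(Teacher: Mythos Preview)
Your proposal is correct and follows essentially the same approach as the paper's proof: partition the symbols of $x_2$ into those matched against $u_2$ (via rule~(b)) and those enqueued (via rule~(a)), and observe that these two order-preserving subsequences are exactly $u_2$ and~$z_2$. The paper's proof compresses this into two sentences, while you have spelled out the bookkeeping more carefully, but the underlying argument is identical.
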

\begin{proof}
This holds since $u_2$ is equal to the subsequence
of symbols of~$x_2$ that are matched against symbols
of~$u_2$, and $z_2$~is
the subsequence of symbols of~$x_2$ which are enqueued
and so not matched against
symbols from~$u_2$.
\end{proof}
\begin{lem}\label{lm:e0e}
Suppose $e_0,e$ are symbols that do not appear
in the strings $u_i$, $x_i$, or~$v$.
Consider the string
$w = e_0 u_1 e u_2 e \cdots e u_k e e_0 x_1 e x_2 e \cdots e x_k e v$.
Any accepting computation of~$w$ must proceed as:
\begin{eqnarray}
\label{eq:e0e}
\varepsilon \| w
   & \vdashstar
         & u_1 e u_2 e u_3 e \cdots  e u_k e \| x_1 e x_2 e x_3 e \cdots e x_k e v \\
   \nonumber
   & \vdashstar & u_2 e u_3 e \cdots e u_k e z_1 \| x_2 e x_3 e \cdots e x_k e v \\
   \nonumber
   & \vdashstar & u_3 e \cdots e u_k e z_1 z_2 \| x_3 e \cdots e x_k e v \\
   \nonumber
   & \vdashstar & u_k e z_1\cdots z_{k-1} \| x_k e v
   ~\vdashstar~ z_1\cdots z_k \| v
   ~\vdashstar \varepsilon \| \varepsilon,
\end{eqnarray}
so that each $x_i$ is consumed by the corresponding~$u_i$ with resultant~$z_i$.
\end{lem}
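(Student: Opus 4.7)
The plan is to work through the computation left-to-right, exploiting the fact that the separator symbols $e_0$ and $e$ occur only at the designated positions in $w$ (never inside any $u_i$, $x_i$, or $v$). The argument will proceed in three stages: forcing the entire first block onto the queue, forcing the second $e_0$ to match the first, and then inducting on $i$ to show that $x_i$ is consumed by $u_i$ with some resultant $z_i$.

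For the first stage, the initial queue is empty so the leading $e_0$ must be pushed. Once $e_0$ sits at the left of the queue, no pop is possible while the block $u_1 e \cdots e u_k e$ is read, since none of its symbols equals $e_0$. Hence every symbol of this block is pushed, yielding the first configuration of~(\ref{eq:e0e}). For the second stage, consider the next input symbol, which is $e_0$. If it were pushed rather than matched, the queue would retain at least one $e_0$, and since the remaining input $x_1 e \cdots e x_k e v$ contains no $e_0$, the queue could never be emptied, contradicting acceptance. So this $e_0$ must match, popping the $e_0$ at the left of the queue and leaving the configuration $u_1 e \cdots e u_k e \| x_1 e \cdots e x_k e v$.

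For the third stage, I will induct on $i$, showing the computation reaches $u_i e \cdots e u_k e z_1 \cdots z_{i-1} \| x_i e \cdots e x_k e v$ and that $x_i$ is consumed by $u_i$. While $x_i$ is being read, no input symbol is $e$, so no $e$ can be popped from the queue during this phase; in particular the first queue-$e$ (originally the separator after $u_i$) is still in place when the input $e$ after $x_i$ is finally read. Consequently every match made during the reading of $x_i$ is against a symbol of $u_i$, and the symbols pushed form a candidate resultant $z_i$. A counting argument then closes the induction: the $k$ input $e$'s remaining after stage two must all be used for matching, because every $e$ ever on the queue (both the original $k$ and any newly pushed) must eventually be popped, and $k$ matches is just enough. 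By the FIFO property of the queue, these matches occur in order, so the $i$-th input $e$ matches the $i$-th queue $e$. This forces all of $u_i$ to be popped by the time the $e$ following $x_i$ is read, which, via Lemma~\ref{lem:consumesubsequence}, is exactly the statement that $x_i$ is consumed by $u_i$.

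The main obstacle will be in this last stage, namely ruling out that the reading of $x_i$ ``spills over'' the $e$-separator and begins popping symbols of $u_{i+1}$. The absence of $e$ in $x_i$ handles this directly, since no queue-$e$ can be popped except by an input-$e$; the remaining bookkeeping is straightforward. No deeper combinatorial insight is needed, as the separators $e_0$ and $e$ were chosen precisely so that the non-determinism of the queue automaton is fully pinned down on inputs of this shape.
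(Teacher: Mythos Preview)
Your argument is correct and follows essentially the same route as the paper: first force the block between the two $e_0$'s onto the queue, then use a count on the $e$'s together with the queue's FIFO order to pin the $i$-th input $e$ to the $i$-th queue $e$, which traps each $x_i$ against~$u_i$. The paper phrases all three stages via the non-nesting property of the associated matching rather than your operational FIFO/counting language, but these are equivalent; as minor cleanups, note that the ``first configuration of~(\ref{eq:e0e})'' is actually reached only after your stage two (once the leading $e_0$ is popped), and your final appeal to Lemma~\ref{lem:consumesubsequence} is unnecessary since you have already established the definition of ``$x_i$ consumed by~$u_i$'' directly.
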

\begin{proof}
The two occurrences of $e_0$ must be matched with each other during
the accepting computation.  By the non-nesting property, this means
that all the symbols between the two $e_0$'s must be pushed onto the
queue instead of matching any prior symbol.  At this point, there
are exactly $k$~many $e$'s on the queue and an equal number of $e$'s
remaining in the input.  The non-nesting
property thus implies that the $i$-th occurrence of~$e$
pushed onto the queue must be matched against the
$i$-th occurrence of~$e$ in the second half of~$w$.  From this
it is evident, again by the non-nesting property,
that the accepting computation follows the
pattern~(\ref{eq:e0e}); therefore each $x_i$ is consumed by~$u_i$.
\end{proof}

\section{Main Result}

\begin{thm}\label{thm:squareNPcomplete}
The set {\sc Square} of squares is NP-complete.  This is true even for
sufficiently large finite alphabets.
\end{thm}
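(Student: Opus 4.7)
Membership in NP is immediate from the characterization in Section 2: a string $w$ is a square iff there is a perfect, non-nesting matching on the positions of $w$ pairing equal symbols, and such a matching can be guessed and checked in polynomial time.

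For NP-hardness, the plan is a many-one reduction from 3-Partition. Given an instance $S=\langle n_i : 1\le i\le 3m\rangle$ with sum $mB$ and $B/4<n_i<B/2$, I would construct in time polynomial in $mB$ a string $w$ of length $O(mB)$ over a fixed alphabet that includes the letter $x$, a pair of brackets $c_1,c_2$, and a small set of framing symbols including the $e_0,e$ used in Lemma~\ref{lm:e0e}. The string $w$ will split into three consecutive parts: a loader, a distributor, and a verifier. The loader encodes the $3m$ input numbers as a sequence of framed bracket blocks $c_1 x^{n_i} c_2$, while the verifier encodes the $m$ bins as a sequence of bracket blocks whose internal $x$-content sums to $B$ each. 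The loader and verifier framings will be set up to satisfy the hypotheses of Lemma~\ref{lm:e0e}, so that in any accepting computation of the queue automaton the loader is entirely pushed onto the queue before any verifier symbol is consumed, and each verifier bin must then consume a contiguous prefix of what is still on the queue.

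The central role of the distributor, sandwiched between loader and verifier, is to allow precisely those interleavings in which each verifier bin consumes three loader blocks whose $n_i$'s sum to $B$. Concretely, the distributor will carry auxiliary bracket/marker symbols whose pushes and matches are compatible with the non-nesting (anti-Monge) property only when the loader blocks can be partitioned into $m$ triples of total $x$-count $B$. The constraint $B/4<n_i<B/2$ guarantees that any set of loader blocks summing to $B$ has cardinality exactly three, so a successful distributor matching directly exhibits a 3-partition of $S$.

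The two directions of correctness are then: \emph{Completeness}, given a valid 3-partition $S_1,\ldots,S_m$, I would exhibit an explicit string $u$ with $w=u\odot u$ by interleaving the loader, distributor, and verifier contributions in the order prescribed by the partition, and verify that the induced matching is non-nesting by construction. \emph{Soundness}, conversely, would iteratively invoke Lemma~\ref{lm:e0e} on the loader/verifier framings to force any accepting computation into the rigid form where each verifier bin consumes a disjoint triple of loader blocks of total size exactly $B$, at which point Lemma~\ref{lem:consumesubsequence} reads off the 3-partition.

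The main obstacle I anticipate is engineering the distributor (and the bracket conventions inside the loader and verifier) so that it is simultaneously \emph{permissive} enough to realize every valid 3-partition as a square shuffle, and \emph{restrictive} enough to rule out cheating matchings such as a verifier bin consuming two or four loader blocks, or a single loader block being split across bins. Once the distributor has this property, everything else reduces to applications of the two preceding lemmas and the non-nesting property. Because the construction has size polynomial in the unary input size of the 3-Partition instance and 3-Partition is strongly NP-complete, this establishes NP-hardness over a fixed finite alphabet.
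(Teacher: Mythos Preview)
Your proposal has a genuine structural gap that the paper's proof works hard to overcome, and which your outline does not address.

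You encode the input numbers $n_1,\ldots,n_{3m}$ in the \emph{loader} as blocks $c_1 x^{n_i} c_2$, invoke Lemma~\ref{lm:e0e} so that the loader is pushed wholesale onto the queue, and then want each verifier bin to consume some triple of loader blocks summing to~$B$. But the queue is FIFO: once the loader blocks are on the queue in the fixed order $n_1,n_2,\ldots,n_{3m}$, they can only be popped in that order. As you yourself note, ``each verifier bin must then consume a contiguous prefix of what is still on the queue''. Hence the first bin is forced to consume (essentially) blocks $n_1,n_2,n_3$, the second $n_4,n_5,n_6$, and so on. There is no mechanism for selecting an arbitrary triple $\{n_{i_1},n_{i_2},n_{i_3}\}$; the permutation freedom that {\sc 3-Partition} requires has simply disappeared. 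Your ``distributor'' is supposed to supply this freedom, but you give no indication of how a string sandwiched between loader and verifier could rearrange the FIFO order of already-enqueued blocks, and in fact it cannot.

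The paper's construction inverts your design precisely to sidestep this obstacle. Its loader carries \emph{no} information about the $n_i$'s: it just enqueues $m$ identical blocks $b^{2B}$. Its distributor then lets the automaton \emph{nondeterministically guess} any sequence $\langle i_k\rangle_{k=1}^{3m}$ with $i_{3j-2}+i_{3j-1}+i_{3j}=B$ for each~$j$ (Lemma~\ref{lem:distributor}). All the $n_i$'s live in the \emph{verifier}, whose job is to check that the guessed $\langle i_k\rangle$ is a permutation of $\langle n_k\rangle$. That permutation check is the real work: it needs the $D_k,E_k,F_k$ gadgets (Lemmas~\ref{lem:DkDk}--\ref{lem:EkFkconverse}), the $v_\ell=c_1x^\ell y^\ell c_2$ separators, and a separate combinatorial theorem (Theorem~\ref{thm:Vsolution}) to force the V-Condition. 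This is also why the paper's $w_S$ has length quadratic in $m+\sum_i n_i$, not the $O(mB)$ you claim; and why the paper does \emph{not} use the hypothesis $B/4<n_i<B/2$ at all. Your outline contains none of these ingredients and, as it stands, would only detect {\sc 3-Partition} instances that happen to be solvable with the $n_i$'s in their given order.
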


We shall prove the theorem for an alphabet with
9 symbols.  A relatively straightforward modification of our
proof shows that the theorem also holds for alphabets of size~7.
We conjecture that Theorem~\ref{thm:squareNPcomplete}
holds even for alphabets of size~2, but this would require
substantially new proof techniques.  (Over a unary alphabet,
{\sc Square} is just the set of even length strings.)

The rest of the paper is devoted to the proof of
Theorem~\ref{thm:squareNPcomplete}.
Clearly the set of squares is in NP.  To prove
the NP-completeness, we shall give a
logspace computable many-one reduction
from {\sc 3-Partition} to {\sc Square}.

Consider an instance of {\sc 3-Partition}
$S=\langle n_i:1\le i\le
3m\rangle$ such that the $n_i$'s are given
in unary notation and
such that $B=(\sum_{i=1}^{3m}n_i)/m$ is an integer.
We also have $B/4<n_i<B/2$ for each~$i$, but shall not
use this fact.  Without loss of generality,
the values~$n_i$ are given in {\em non-increasing} order
(if not, then reorder them).
The many-one reduction to {\sc Square}
constructs a string $w_S$ over the
alphabet
\[
\Sigma=\{a_1,a_2,b,e_0,e,c_1,c_2,x,y\},
\]
such that $w_S$ is a
square iff $S$~is a ``yes'' instance of {\sc 3-Partition}.
The string~$w_S$ consists of three parts:
\[
w_S ~ := ~ \load{S}\dist{S}\ver{S}.
\]
These are defined by
\begin{align*}
\load{S} &~=~ e_0\prod_{i=1}^m(b^{2B}e) \\
\dist{S} &~=~ e_0\prod_{i=1}^m((a_1b^Ba_2)^3e) \\
\ver{S} &~=~
   \prod_{k=1}^{3m} \left[
         v_{4k-3}D_k v_{4k-3}
         v_{4k-2}D_k v_{4k-2}
         v_{4k-1}E_k v_{4k-1}
         v_{4k}  F_k v_{4k}
   \right]
\end{align*}
where
\begin{eqnarray*}
v_\ell &=& c_1 x^\ell y^\ell c_2 \\
D_k &=& (a_1^2 b^{n_k} a_2^2)^{3m-k+1} \\
E_k &=& (a_1^2 b^B a_2^2)^{3m-k} (a_1 b^{n_k} a_2) (a_1^2 b^B a_2^2)^{3m-k} \\
F_k &=& (a_1^2 b^B a_2^2)^{2(3m-k)} \\
\end{eqnarray*}
It is useful to let $U_\ell := a_1^2 b^\ell a_2^2$ as this lets
us shorten the expressions for $D_k$, $E_k$, and~$F_k$, so
$D_k = U_{n_k}^{3m-k+1}$,
$E_k = U_B^{3m-k}a_1 b^{n_k} a_2 U_B^{3m-k}$,
and $F_k = U_B^{2(3m-k)}$.

The length of $w_S$ is quadratic in $m + \sum_i n_i$,
so $w_S$ is polynomially bounded.  It is clear that $w_S$ can be
constructed from~$S$ by a logspace computation.

The actions of the loader and distributor are relatively
easy to understand, so we describe them first.
As the next lemma states,
the intended function of the loader is to place $m$ many blocks of
$2B$~many $b$'s, separated by~$e$'s, onto the queue.
\begin{lem}\label{lem:loader}
Any accepting computation for $w_S$ starts off
as
\[
\varepsilon \| w_S ~\vdashstar~ e_0 (b^{2B} e)^m \| \dist{S} \ver{S}.
\]
In the subsequent part of
the accepting computation, the $i$-th occurrence of
the subword $(a_1 b^B a_2)^3$ in $\dist{S}$ will be consumed
by the $i$-th occurrence of~$b^{2B}$ in the
queue.
\end{lem}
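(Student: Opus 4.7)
The plan is to observe that $w_S$ fits the template of Lemma~\ref{lm:e0e} exactly, so that both conclusions of Lemma~\ref{lem:loader} can be read off from the derivation~(\ref{eq:e0e}).

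First I would take $k := m$, $u_i := b^{2B}$, $x_i := (a_1 b^B a_2)^3$, and $v := \ver{S}$. Under these substitutions the concatenation $\load{S}\dist{S}\ver{S}$ literally matches the string $e_0 u_1 e u_2 e \cdots e u_k e e_0 x_1 e x_2 e \cdots e x_k e v$ appearing in the statement of Lemma~\ref{lm:e0e}. Next I would verify the hypothesis of that lemma, namely that neither $e_0$ nor $e$ occurs in any of the $u_i$, $x_i$, or $v$. This is a direct syntactic check against the definitions: the $u_i$'s use only $b$, the $x_i$'s use only $a_1, b, a_2$, and $\ver{S}$ is built entirely from the $v_\ell$ (which use $c_1, c_2, x, y$) and the $U_\ell$-blocks (which use $a_1, b, a_2$). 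In particular $e_0$ and $e$ occur only in the prefixes $\load{S}$ and $\dist{S}$, exactly at the locations demanded by the template.

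Once Lemma~\ref{lm:e0e} is invoked, the first assertion of Lemma~\ref{lem:loader} follows by reading off the intermediate configuration that arises in the derivation~(\ref{eq:e0e}) immediately after all of $\load{S}$ has been read but before the second $e_0$ is matched: at that moment the queue holds $e_0$ followed by the pushed copy of $(b^{2B}e)^m$, and the unread input is $\dist{S}\ver{S}$. That is exactly the configuration $e_0 (b^{2B}e)^m \| \dist{S}\ver{S}$ displayed in the lemma. The second assertion — that the $i$-th block $(a_1 b^B a_2)^3$ of $\dist{S}$ is consumed by the $i$-th block $b^{2B}$ on the queue — is precisely the final clause of Lemma~\ref{lm:e0e} under our substitution, which states that each $x_i$ is consumed by the corresponding $u_i$.

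There is no real obstacle: the result is essentially a one-line specialization of Lemma~\ref{lm:e0e}. The only step deserving a moment's care is the hypothesis check, and that is immediate from the disjointness between the separator alphabet $\{e_0, e\}$ and the symbols used in the $u_i$, $x_i$, and $\ver{S}$.
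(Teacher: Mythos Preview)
Your proposal is correct and follows exactly the paper's approach: the paper's proof is a one-sentence invocation of Lemma~\ref{lm:e0e}, noting only that $e_0$ occurs twice in~$w_S$ and that $e$ occurs equally often between the two~$e_0$'s as after the second. Your write-up spells out the substitutions and the alphabet-disjointness check more explicitly, but the argument is the same.
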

\begin{proof}
This is an immediate consequence of Lemma~\ref{lm:e0e}
since there are only two occurrences of~$e_0$ in~$w_S$,
and since $e$ has the same number of occurrences between the
two~$e_0$'s as after the second~$e_0$.
\end{proof}

Consider how the subword $(a_1 b^B a_2)^3$ can
be consumed by $b^{2B}$.  Since there are no
$a_1$'s or~$a_2$'s in $b^{2B}$, the $a_1$'s and~$a_2$'s
must be pushed onto the queue.  In addition, exactly
$2B$ of the $3B$ many occurrences of~$b$
in $(a_1 b^B a_2)^3$ must be matched
against the symbols of $b^{2B}$.  Thus, when
the subword $(a_1 b^B a_2)^3$ is consumed by $b^{2B}$
a resultant string of the form
$a_1 b^{j_1} a_2 a_1 b^{j_2} a_2 a_1 b^{j_3} a_2$ must
be pushed onto the queue where $j_1+j_2+j_3 = 3B-2B = B$.
Since the automaton is non-deterministic,
any such values for $j_1,j_2,j_3$ can be achieved.
These observations, together
with Lemma~\ref{lem:loader}, prove
Lemma~\ref{lem:distributor}:

\begin{lem}\label{lem:distributor}
Given any sequence of non-negative
integers $\langle i_k\rangle_{k=1}^{3m}$ such that
\begin{equation}\label{eq:condontrip}
\forall j\in\{1,2,\ldots,m\}, \quad
i_{3j-2}+i_{3j-1}+i_{3j}=B,
\end{equation}
there exists a computation
$\varepsilon\|w_S\,\vdashstar\prod_{k=1}^{3m}(a_1b^{i_k}a_2)\|\ver{S}$.
Conversely, if
$\varepsilon\|w_S\,\vdashstar W\|\ver{S}$
then $W$ must be of the form $\prod_{k=1}^{3m}(a_1b^{i_k}a_2)$, so that
condition~(\ref{eq:condontrip}) holds.
\end{lem}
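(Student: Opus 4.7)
The plan is to leverage Lemma~\ref{lem:loader} to reduce both implications to analyzing a single ``block'' consumption: for each $j\in\{1,\ldots,m\}$, the question of how the input block $(a_1 b^B a_2)^3$ can be consumed by the queue block $b^{2B}$. Lemma~\ref{lem:loader} already tells us that every accepting computation first arrives at configuration $e_0(b^{2B}e)^m \| \dist{S}\ver{S}$ and that the $j$-th occurrence of $(a_1 b^B a_2)^3$ in $\dist{S}$ is consumed by the $j$-th occurrence of $b^{2B}$ in the queue. Concatenating the $m$ resultants (the $e_0$ and intervening $e$'s contribute nothing further to the queue, since they are matched in order) yields the queue contents $W$ once $\dist{S}$ has been read.

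For the ``conversely'' direction, I would apply Lemma~\ref{lem:consumesubsequence} to a single block: writing $z_j$ for its resultant, we have $(a_1 b^B a_2)^3 = b^{2B} \odot z_j$. Since the input block contains three $a_1$'s, three $a_2$'s, and $3B$ $b$'s, and since no $a_1$ or $a_2$ can be matched against a queue symbol $b$, all six $a$-symbols must appear in $z_j$; the remaining slots in $z_j$ are filled by exactly $3B - 2B = B$ of the $b$-symbols. Because $\odot$ preserves relative order, the $a_1$'s and $a_2$'s must occur in $z_j$ in their original alternating pattern, so $z_j = a_1 b^{j_1} a_2 a_1 b^{j_2} a_2 a_1 b^{j_3} a_2$ with $j_1 + j_2 + j_3 = B$. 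Renaming $(i_{3j-2}, i_{3j-1}, i_{3j}) := (j_1,j_2,j_3)$ and concatenating the $z_j$'s across $j=1,\ldots,m$ gives $W = \prod_{k=1}^{3m}(a_1 b^{i_k} a_2)$ and condition~(\ref{eq:condontrip}).

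For the forward (``there exists a computation'') direction, given any $(i_k)_{k=1}^{3m}$ satisfying~(\ref{eq:condontrip}), I would exhibit explicit non-deterministic choices that realise the desired resultant in each block. In block $j$, the automaton pushes $a_1$, then pushes $i_{3j-2}$ copies of $b$ and matches the next $B - i_{3j-2}$ copies against the queue, then pushes $a_2 a_1$, then handles $b^B$ by pushing $i_{3j-1}$ and matching $B - i_{3j-1}$, then pushes $a_2 a_1$, then handles the final $b^B$ by pushing $i_{3j}$ and matching $B - i_{3j}$, and finally pushes $a_2$; this matches $(B-i_{3j-2}) + (B-i_{3j-1}) + (B-i_{3j}) = 3B - B = 2B$ $b$-symbols, exactly exhausting the queue segment $b^{2B}$, and leaves the intended resultant on the queue. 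The $e$-symbols flanking each block are handled by the schedule of Lemma~\ref{lm:e0e}, so the computation assembles into $\varepsilon\|w_S \vdashstar \prod_{k=1}^{3m}(a_1 b^{i_k} a_2)\|\ver{S}$, as required. There is no real obstacle in this lemma: all the non-nesting work has already been done inside Lemma~\ref{lm:e0e} and~Lemma~\ref{lem:loader}, and what remains is counting of $a$'s and $b$'s together with the observation that the only constraint on the non-deterministic choices inside a block is the total count $j_1+j_2+j_3=B$.
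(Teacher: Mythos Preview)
Your proposal is correct and follows essentially the same approach as the paper: reduce via Lemma~\ref{lem:loader} to analyzing how each input block $(a_1 b^B a_2)^3$ is consumed by the queue block $b^{2B}$, then count $a$'s and $b$'s to pin down the resultant's shape. The paper's argument is terser (it simply asserts that ``any such values $j_1,j_2,j_3$ can be achieved'' by non-determinism), whereas you spell out the explicit push/match schedule and invoke Lemma~\ref{lem:consumesubsequence} for the converse; both are fine and amount to the same proof.
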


We now turn to analyzing the effect of $\ver{S}$.
By Lemma~\ref{lem:distributor}, any accepting computation
for $\varepsilon \| w_S$ reaches a configuration
$\prod_{k=1}^{3m}(a_1b^{i_k}a_2)\|\ver{S}$
satisfying (\ref{eq:condontrip}).  The intuition
is that the sets $S_j := \{i_{3j-2},i_{3j-1},i_{3j}\}$
should be a solution to the {\sc 3-Partition} problem~$S$.
By~(\ref{eq:condontrip}), the members of each~$S_j$
sum to~$B$.  Thus, the sets $S_j$ are a solution
to the {\sc 3-Partition} iff the sequence
$\langle i_k \rangle_{k=1}^{3m}$ is a permutation (a reordering)
of~$S = \langle n_k \rangle_{k=1}^{3m}$.

By Lemma~\ref{lem:distributor},
to complete the
proof of Theorem~\ref{thm:squareNPcomplete}, it suffices to
show that $\prod_{k=1}^{3m}(a_1b^{i_k}a_2)\|\ver{S}$
is accepted if and only if the sequence
$\langle i_k \rangle_{k=1}^{3m}$ is a permutation
of~$S$.
We first prove the easier direction
of this equivalence:
\begin{lem}\label{lem:MainIfThen}
Suppose $\langle i_k \rangle_{k=1}^{3m}$ is a permutation
of~$S$.  Then the configuration
$\prod_{k=1}^{3m}(a_1b^{i_k}a_2)\|\ver{S}$
is accepted.  Therefore, if $S$ is a ``Yes'' instance
of {\sc 3-Partition},
then $\varepsilon \,\| w_S \vdashstar \varepsilon \| \varepsilon$
and $w_S$ is in {\sc Square}.
\end{lem}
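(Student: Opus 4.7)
Fix a permutation $\pi$ of $\{1,\ldots,3m\}$ with $i_j = n_{\pi(j)}$ for every $j$ (exists by hypothesis), and write $A_j := a_1 b^{n_{\pi(j)}} a_2$. For $1 \le k \le 3m+1$, let
\[
Q^{(k)} := \prod_{j\,:\,\pi(j)\ge k} A_j
\]
with factors in increasing order of $j$, so that $Q^{(1)}$ is the starting queue and $Q^{(3m+1)} = \varepsilon$. The plan is to exhibit an accepting computation in which the $k$-th four-sub-block of $\ver{S}$ transforms queue $Q^{(k)}$ into queue $Q^{(k+1)}$, intuitively using the value $n_k$ to cancel the unique element $A_{j^*}$ with $j^* := \pi^{-1}(k)$.

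First we isolate the uniform role of the $v_\ell$-brackets. Each sub-block has the shape $v_\ell M v_\ell$ with $M \in \{D_k, E_k, F_k\}$ containing only $a_1, a_2, b$. Starting from a queue $Q$ of $a$'s and $b$'s, the symbols $c_1, x, y, c_2$ of the first $v_\ell$ cannot match the front of $Q$ and must all be pushed, giving queue $Q v_\ell$. We then choose nondeterministic steps during $M$ so as to match all of $Q$ against a subsequence of $M$ and push the complementary subsequence $R$ to the back; by Lemma~\ref{lem:consumesubsequence} this is possible whenever $M = Q \odot R$, leaving queue $v_\ell R$. The second $v_\ell$ then matches the $v_\ell$-prefix of the queue, leaving~$R$. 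Hence $v_\ell M v_\ell$ realizes any queue transformation $Q \mapsto R$ for which $M = Q \odot R$.

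Next we exhibit the required shuffle decompositions. For sub-blocks~1 and~2, both with $M = D_k = U_{n_k}^{3m-k+1}$: since $n_1,\ldots,n_{3m}$ is non-increasing, $\pi(j) \ge k$ forces $n_{\pi(j)} \le n_k$, so each $A_j = a_1 b^{n_{\pi(j)}} a_2$ embeds as a subsequence of one copy of $U_{n_k} = a_1^2 b^{n_k} a_2^2$ with complementary residual $a_1 b^{n_k - n_{\pi(j)}} a_2$. Placing the $3m-k+1$ factors of $Q^{(k)}$ into the $3m-k+1$ copies of $U_{n_k}$ in order, sub-block~1 yields $Q^{(k,1)} := \prod_{j\,:\,\pi(j)\ge k}(a_1 b^{n_k-n_{\pi(j)}} a_2)$; sub-block~2 applies the same principle in reverse to restore $Q^{(k,2)} = Q^{(k)}$. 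For sub-block~3 with $E_k = U_B^{3m-k}(a_1 b^{n_k} a_2) U_B^{3m-k}$, let $s$ and $t$ denote the number of factors of $Q^{(k)}$ with $j < j^*$ and $j > j^*$ respectively, so $s + t = 3m - k$. Embed $A_{j^*}$ exactly into the central $a_1 b^{n_k} a_2$; embed the $s$ earlier $A_j$'s into the first $s$ copies of the leading $U_B^{3m-k}$ (residual $a_1 b^{B-n_{\pi(j)}} a_2$ each); embed the $t$ later $A_j$'s into the first $t$ copies of the trailing $U_B^{3m-k}$ similarly. The remaining $t$ leading and $s$ trailing $U_B$'s stay as full-$U_B$ residuals, so $Q^{(k,3)}$ consists, in order, of $s$ partials, $t$ full $U_B$'s, $t$ partials, and $s$ full $U_B$'s. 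For sub-block~4 with $F_k = U_B^{2(3m-k)}$, pair each block of $Q^{(k,3)}$ with one $U_B$ of $F_k$ in order: a partial $a_1 b^{B-n_{\pi(j)}} a_2$ inside a $U_B$ leaves residual $A_j$, while a full $U_B$ inside a $U_B$ leaves empty residual. The result is exactly $Q^{(k+1)}$.

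The main obstacle is the bookkeeping: verifying that each chosen embedding is truly a valid subsequence of the relevant $M$ (with matched and residual positions appearing in the prescribed order), and that the counts balance ($2s + 2t = 2(3m-k)$ pieces of $Q^{(k,3)}$ against $2(3m-k)$ copies of $U_B$ in $F_k$). These verifications are routine but must be done with care at each sub-block. Telescoping the invariant $Q^{(k)} \to Q^{(k+1)}$ over $k = 1, \ldots, 3m$ gives $Q^{(3m+1)} = \varepsilon$, producing the accepting computation $\prod_{k=1}^{3m}(a_1 b^{i_k} a_2) \,\|\, \ver{S} \vdashstar \varepsilon \| \varepsilon$. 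Combined with Lemma~\ref{lem:distributor}, this yields $\varepsilon \| w_S \vdashstar \varepsilon \| \varepsilon$, and hence $w_S$ is in {\sc Square}.
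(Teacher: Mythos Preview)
Your proof is correct and follows essentially the same strategy as the paper's Lemmas~\ref{lem:DkDk} and~\ref{lem:EkFk}: match each $a_1 b^{i_j} a_2$ factor against one $U$-block, cancel the maximal element against the central $a_1 b^{n_k} a_2$ of~$E_k$, and restore via~$F_k$, telescoping over~$k$. The only differences are cosmetic---your intermediate queue $Q^{(k,3)}$ after~$E_k$ arranges the unmatched $U_B$'s as $t$ then~$s$ whereas the paper's $W^{\prime\prime}$ groups all $3m-k$ of them together in the middle---and note that Lemma~\ref{lem:consumesubsequence} actually states only the converse of what you invoke it for (though the direction you need, that $M=Q\odot R$ implies $M$ can be consumed by~$Q$ with resultant~$R$, is immediate from the definition of the automaton).
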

We prove Lemma~\ref{lem:MainIfThen} after first proving
Lemmas \ref{lem:DkDk} and~\ref{lem:EkFk}.
\begin{dfn}
A computation accepting $w_S$ satisfies the
{\em V-Condition} provided that for each~$\ell$
(for $1\le \ell \le 12m$)
the second occurrence of the
subword $v_\ell$ in~$w_S$ is consumed by
the first occurrence of~$v_\ell$
in~$w_S$.  This means that the symbols of the second~$v_\ell$
are completely matched by those of the first~$v_\ell$.
\end{dfn}
Theorem~\ref{thm:Vcondition} below will prove that the V-Condition
must hold, but for now it suffices to just assume it.

\begin{dfn}
A string~$z$ {\em has $k$~alternations of the symbols
$a_1,a_2$} provided $(a_1a_2)^k$ is a subsequence of~$z$
but $(a_1 a_2)^{k+1}$ is not.
\end{dfn}

\begin{lem}\label{lem:DkDk}
Let $i_1,\ldots,i_{3m-k+1}$ be natural numbers, and
$W = \prod_{j=1}^{3m-k+1} (a_1 b^{i_j} a_2)$.
Suppose the V-Condition holds for a computation containing
the subcomputation
\begin{equation}\label{eq:DkDkLemma}
 W \| v_{4k-3}D_k v_{4k-3} v_{4k-2}D_k v_{4k-2} (\cdots)
 ~\vdashstar~ W^\prime \| (\cdots).
\end{equation}
(The ``$(\cdots)$'' denotes the rest of the input string.)
Then $W^\prime = W$, and $i_j\le n_k$ for all~$j$.
Conversely, if each $i_j\le n_k$,
then the subcomputation~{\rm (\ref{eq:DkDkLemma})}
can be carried out.
\end{lem}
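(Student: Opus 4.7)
The plan is to trace the subcomputation step by step, using the V-Condition to force the macro-structure and the count of $a_1,a_2$-alternations to force the block-by-block matching. Since the symbols $c_1,x,y,c_2$ occurring in $v_{4k-3}$ appear nowhere in $W$ or in $D_k$, the first occurrence of $v_{4k-3}$ must be pushed verbatim onto the queue, giving queue contents $W\,v_{4k-3}$. The V-Condition then requires the second $v_{4k-3}$ to be matched against the first, which by the non-nesting property is possible only if, at the moment the second $v_{4k-3}$ begins to be read, the front of the queue is the $c_1$ of the first $v_{4k-3}$; equivalently, all of $W$ must have been popped beforehand. Since $D_k$ contains no $c_1,x,y,c_2$ symbols, the trailing $v_{4k-3}$ on the queue cannot be touched while $D_k$ is processed, so the first copy of $D_k$ must be consumed by $W$, with some resultant $R_1$; the queue then becomes $v_{4k-3}\,R_1$, and after matching the second $v_{4k-3}$ it is just $R_1$. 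The same argument applied to $v_{4k-2}D_kv_{4k-2}$ shows that the second copy of $D_k$ is consumed by $R_1$ with some resultant $R_2$, leaving $W' = R_2$.

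The heart of the argument is to identify $R_1$ precisely. By Lemma~\ref{lem:consumesubsequence}, $W$ is a subsequence of $D_k$, so in particular the $a_1,a_2$-projection of $W$ is a subsequence of that of $D_k$. The projection of $W$ is $(a_1 a_2)^{3m-k+1}$, which has $3m-k+1$ alternations of $a_1,a_2$, while the projection of $D_k$ is $(a_1 a_1 a_2 a_2)^{3m-k+1}$, which also has exactly $3m-k+1$ alternations (each $U$-block admits only one, since its pattern is $a_1 a_1 a_2 a_2$). A short inductive argument then forces a block-by-block alignment: if $\pi_j$ (resp.\ $\sigma_j$) denotes the $U$-block of $D_k$ containing the image of the $a_1$ (resp.\ $a_2$) of $W$'s $j$-th block, one checks $\pi_j\le\sigma_j<\pi_{j+1}$, so that $\pi_j=\sigma_j=j$ for every $j$. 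The $i_j$ many $b$'s of $W$'s $j$-th block must then lie between the chosen $a_1$ and $a_2$ in the $j$-th $U$-block, and so are drawn from the $n_k$ many $b$'s of that $U$-block. Hence $i_j\le n_k$.

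The symbols of the $j$-th $U$-block of $D_k$ not matched to $W$'s $j$-th block consist of one $a_1$ (before the $b$'s), $n_k-i_j$ many $b$'s, and one $a_2$ (after the $b$'s), in their original order, so they contribute exactly $a_1 b^{n_k-i_j} a_2$ to the pushed resultant. Consequently $R_1 = \prod_{j=1}^{3m-k+1}(a_1 b^{n_k-i_j} a_2)$, which has the same block shape as $W$. Applying the same analysis to the second $D_k$ being consumed by $R_1$, the $j$-th block of $R_1$ aligns with the $j$-th $U$-block of the second $D_k$, and the resultant $R_2$ has $j$-th block $a_1 b^{n_k-(n_k-i_j)} a_2 = a_1 b^{i_j} a_2$. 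Thus $R_2 = W$, and $W' = W$ as claimed.

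For the converse, if each $i_j\le n_k$ we produce the subcomputation explicitly. Push the first $v_{4k-3}$. For each $j$ in order, match $W$'s $j$-th block $a_1 b^{i_j} a_2$ against the first $a_1$, the first $i_j$ $b$'s, and the first $a_2$ of the $j$-th $U$-block of $D_k$, and push the remaining $a_1 b^{n_k-i_j} a_2$. Then match the second $v_{4k-3}$ against the first; the queue is now $R_1=\prod_j(a_1 b^{n_k-i_j} a_2)$. Since also $n_k-i_j\le n_k$, the symmetric strategy processes $v_{4k-2} D_k v_{4k-2}$ and restores the queue to $W$. The one genuinely non-trivial step in the whole argument is the forced block-by-block alignment in the second paragraph; once the alternation count pins that down, the rest is straightforward bookkeeping.
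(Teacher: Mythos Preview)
Your proof is correct and follows essentially the same approach as the paper's: both use the V-Condition and non-nesting to factor the subcomputation into two stages in which each copy of $D_k$ is consumed by the current queue word, and both use the equality of the $a_1,a_2$-alternation counts of $W$ and $D_k$ to force block-by-block alignment, yielding the intermediate resultant $\prod_j (a_1 b^{n_k-i_j} a_2)$ and hence $W' = W$. Your treatment is somewhat more explicit (spelling out why $v_{4k-3}$ must be pushed verbatim and formalizing the alignment via the indices $\pi_j,\sigma_j$), but the argument is the same in substance.
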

Since $W^\prime = W$, the
computation~(\ref{eq:DkDkLemma}) might seem
to achieve nothing, and thus be
pointless; the point,
however, is that it ensures that the values~$i_j$
are $\le n_k$.  This will be useful for
the proof of Lemma~\ref{lem:MainIf}.
\begin{proof}
By the V-Condition, and the non-nesting property,
the computation~(\ref{eq:DkDkLemma}) must have the form
\[
W \| v_{4k-3}D_k v_{4k-3} v_{4k-2}D_k v_{4k-2} (\cdots)
  ~\vdashstar~ W^\pprime \| v_{4k-2}D_k v_{4k-2} (\cdots)
  ~\vdashstar~ W^\prime \| (\cdots),
\]
where $W^\pprime$ is the resultant
when the first $D_k$ is consumed by~$W$, and
$W^\prime$ is similarly the resultant when
the second $D_k$ is consumed by~$W^\pprime$.

$W$ and $D_k$ both have $3m-k+1$
alternations of $a_1,a_2$.
Therefore, when $D_k$ is consumed by~$W$,
the $j$-th $a_1$ (resp.,~$a_2$) symbol in~$W$
must match an $a_1$ (resp., $a_2$)
from the $j$-th block $a_1 a_1$ (resp, $a_2 a_2$)
in~$D_k$.  The other $a_1$ (resp., $a_2$) in that
block is pushed onto the queue as part of~$W^\pprime$.
Furthermore,
the subword $b^{i_j}$ in the $j$-th component of~$W$
must match $i_j$~of the $b$'s in the $j$-th occurrence
of~$b^{n_k}$ in~$D_k$; this leaves $n_k-i_j$ many $b$'s
to be pushed onto the queue as part of~$W^\pprime$.
This is possible if and only if $i_j \le n_k$ for all~$j$,
and if so,
$W^\pprime = \prod_{j=1}^{3m-k+1} (a_1 b^{n_k-i_j} a_2)$.

The second $D_k$ must be consumed by~$W^\pprime$, and
the same argument shows that this means
$W^\prime = \prod_{j=1}^{3m-k+1} (a_1 b^{i_j} a_2) = W$.
\end{proof}

\begin{lem}\label{lem:EkFk}
Let $i_1,\ldots,i_{3m-k+1}$ be natural numbers, and
$W = \prod_{j=1}^{3m-k+1} (a_1 b^{i_j} a_2)$.
Suppose $i_J = \max_j \{i_j\} = n_k$.
Let $i^\prime_1,\ldots,i^\prime_{3m-k}$
be the sequence $\langle i_j \rangle _j$ with $i_J$ omitted,
and let $W^\prime = \prod_{j=1}^{3m-k} (a_1 b^{i^\prime_j} a_2)$.
Then there is a computation
\begin{equation}\label{eq:EkFkLemma}
 W \| v_{4k-1}E_k v_{4k-1} v_{4k}F_k v_{4k} (\cdots)
 ~\vdashstar~ W^\prime \| (\cdots).
\end{equation}
\end{lem}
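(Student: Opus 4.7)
The plan is to construct the claimed subcomputation explicitly, choosing at each step whether to push or to match. The key observation is that the $J$-th block of $W$, namely $a_1 b^{n_k} a_2$ (since $i_J = n_k$ by hypothesis), can be matched exactly against the middle subword $a_1 b^{n_k} a_2$ of $E_k$ with no leftover, while every other block of $W$ will be ``paid for'' by a copy of $U_B$ from $E_k$, leaving a controlled residue on the queue that is eventually cleaned up by $F_k$.

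Concretely, the computation splits into six phases. (i)~Push $v_{4k-1}$ onto the queue (forced, since $v_{4k-1}$ and $W$ share no alphabet symbols). (ii)~Consume $E_k$ as follows: for each $j < J$, match the $j$-th block $a_1 b^{i_j} a_2$ of $W$ against the $j$-th copy of $U_B$ in the left half of $E_k$, popping that block and pushing the residue $a_1 b^{B-i_j} a_2$; then push the next $3m-k-J+1$ copies of $U_B$ in full; then match the $J$-th block of $W$ against the middle $a_1 b^{n_k} a_2$ of $E_k$ exactly, with no residue; then match the remaining $3m-k+1-J$ blocks of $W$ against the first corresponding copies of $U_B$ in the right half of $E_k$; then push the final $J-1$ copies of $U_B$ in full. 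This leaves $W$ entirely popped, and the material pushed forms a string $Z_1$ consisting of exactly $2(3m-k)$ ``block components'', each either a partial block $a_1 b^{B-i_j} a_2$ or a full copy of $U_B$. (iii)~Match the on-queue $v_{4k-1}$ against the second $v_{4k-1}$ in the input. (iv)~Push $v_{4k}$ onto the queue. (v)~Consume $F_k = U_B^{2(3m-k)}$ by pairing its $2(3m-k)$ copies of $U_B$, in order, with the $2(3m-k)$ block components of $Z_1$: a partial block $a_1 b^{B-i_j} a_2$ paired with one $U_B$ pops it and pushes the residue $a_1 b^{i_j} a_2$, exactly reconstructing a block of $W'$, while a full $U_B$ of $Z_1$ paired with one $U_B$ of $F_k$ pops it with no residue. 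The pushed residues accumulate, in the correct order, to $W' = \prod_{j\ne J}(a_1 b^{i_j} a_2)$. (vi)~Match the final $v_{4k}$, leaving $W'$ on the queue as required.

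The only real work is bookkeeping: verifying at each push/match that the intended symbol sits at the head of the queue under the FIFO order, and confirming that the $v_{4k-1}$ and $v_{4k}$ lodged in the middle of the queue never interfere with the consumption of $W$ or $Z_1$. The latter point is immediate because $v_{4k-1}$ and $v_{4k}$ are built from the symbols $c_1, c_2, x, y$, none of which occurs in $E_k$ or $F_k$, so neither $v_\ell$ is ever an eligible match during those phases. The former reduces to a routine trace through the block structure of $W$ against $E_k$, and then of $Z_1$ against $F_k$. The hypothesis $i_J = n_k$ is used exactly once, in phase~(ii), to match the $J$-th block of $W$ against the middle of $E_k$ with no residue; this is the sole place where the assumption $i_J = \max_j i_j = n_k$ enters.
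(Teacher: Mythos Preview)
Your construction is correct and follows essentially the same strategy as the paper: use the blocks of~$W$ to consume matching $U_B$'s from~$E_k$ (leaving residues $a_1 b^{B-i_j} a_2$), cancel the $J$-th block exactly against the central $a_1 b^{n_k} a_2$, then let the resulting $2(3m-k)$ block components consume~$F_k$ to recover~$W'$. The only difference is a harmless reordering inside phase~(ii): after cancelling the $J$-th block, the paper first pushes $J{-}1$ copies of~$U_B$ from the right half of~$E_k$ and \emph{then} lets the remaining blocks of~$W$ consume the last $3m{-}k{+}1{-}J$ copies, so that all full $U_B$'s in the intermediate word~$W''$ sit contiguously in the middle; you instead consume first and push last, so your~$Z_1$ has the full $U_B$'s split into two groups. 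Both orderings are legitimate nondeterministic choices, both yield $2(3m{-}k)$ block components, and both clean up identically against~$F_k$.
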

The computation~(\ref{eq:EkFkLemma}) will satisfy the
V-Condition.
Lemma~\ref{lem:EkFkconverse} below will prove
a converse to Lemma~\ref{lem:EkFk} under the additional
assumption of the V-Condition.
Lemma~\ref{lem:EkFk}, however, is all that is needed
for Lemma~\ref{lem:MainIfThen}.

\begin{proof}
We construct a computation of the form
\begin{equation}\label{eq:FkFkProof}
W \| v_{4k-1}E_k v_{4k-1} v_{4k}F_k v_{4k} (\cdots)
~\vdashstar~ W^\pprime \| v_{4k}F_k v_{4k} (\cdots)
~\vdashstar~ W^\prime \| (\cdots).
\end{equation}
Recalling that $E_k = U_B^{3m-k}a_1 b^{n_k} a_2 U_B^{3m-k}$
and using $i_J = n_k$,
the first half of the computation~(\ref{eq:FkFkProof}) has the form
\begin{eqnarray*}
\lefteqn{\prod_{j=1}^{3m-k+1} (a_1 b^{i_j} a_2) \| v_{4k-1}U_B^{3m-k}a_1 b^{n_k} a_2 U_B^{3m-k} v_{4k-1} }
\\
 &\vdashstar&
   \prod_{j=J}^{3m-k+1} (a_1 b^{i_j} a_2) v_{4k-1}  \prod_{j=1}^{J-1} (a_1 b^{B-i_j} a_2)
   \| U_B^{3m-k-(J-1)} a_1 b^{n_k} a_2 U_B^{3m-k} v_{4k-1}
\\
 &\vdashstar&
   \prod_{j=J}^{3m-k+1} (a_1 b^{i_j} a_2) v_{4k-1}  \prod_{j=1}^{J-1} (a_1 b^{B-i_j} a_2)
   U_B^{3m-k-(J-1)} \| a_1 b^{n_k} a_2 U_B^{3m-k} v_{4k-1}
\\
 &\vdashstar&
   \prod_{j=J+1}^{3m-k+1} (a_1 b^{i_j} a_2) v_{4k-1}  \prod_{j=1}^{J-1} (a_1 b^{B-i_j} a_2)
   U_B^{3m-k-(J-1)} \| U_B^{3m-k} v_{4k-1}
\\
 &\vdashstar&
   \prod_{j=J+1}^{3m-k+1} (a_1 b^{i_j} a_2) v_{4k-1}  \prod_{j=1}^{J-1} (a_1 b^{B-i_j} a_2)
   U_B^{3m-k} \| U_B^{3m-k-(J-1)} v_{4k-1}
\\
 &\vdashstar&
   v_{4k-1}  \prod_{j=1}^{J-1} (a_1 b^{B-i_j} a_2)
   U_B^{3m-k} \prod_{j=J+1}^{3m-k+1} (a_1 b^{B-i_j} a_2) \| v_{4k-1}
\\
 &\vdashstar&
   \prod_{j=1}^{J-1} (a_1 b^{B-i_j} a_2)
   U_B^{3m-k} \prod_{j=J+1}^{3m-k+1} (a_1 b^{B-i_j} a_2) \| \varepsilon
\\
 &=& \prod_{j=1}^{J-1} (a_1 b^{B-i^\prime_j} a_2) U_B^{3m-k}
     \prod_{j=J}^{3m-k} (a_1 b^{B-i^\prime_j} a_2) \| \varepsilon
 ~=~ W^\pprime \| \varepsilon.
\end{eqnarray*}
The first and fifth steps shown above use the fact that when
$a_1^2 b^B a_2^2$ is consumed by $a_1 b^{i_j} a_2$ the resultant
is $a_1 b^{B-i_j}a_2$ as shown in the proof of Lemma~\ref{lem:DkDk}.
The third step matches $a_1 b^{i_j} a_2$
with the equal $a_1 b^{n_k} a_2$.  The final step matches
$v_{4k-1}$.  The other steps push words $v_{4k-1}$ and $U_B$
from the input to the queue.

The second half of the computation~(\ref{eq:FkFkProof})
proceeds as follows:
\begin{eqnarray*}
\lefteqn{ \prod_{j=1}^{J-1} (a_1 b^{B-i^\prime_j} a_2) U_B^{3m-k}
     \prod_{j=J}^{3m-k} (a_1 b^{B-i^\prime_j} a_2)
     \| v_{4k} (a_1^2 b^B a_2^2)^{2(3m-k)} v_{4k} }
\\
   &\vdashstar& U_B^{3m-k}
     \prod_{j=J}^{3m-k} (a_1 b^{B-i^\prime_j} a_2) v_{4k}
     \prod_{j=1}^{J-1} (a_1 b^{i^\prime_j} a_2)
     \| (a_1^2 b^B a_2^2)^{2(3m-k)-(J-1)} v_{4k} \\
\\
   &\vdashstar&
     \prod_{j=J}^{3m-k} (a_1 b^{B-i^\prime_j} a_2) v_{4k}
     \prod_{j=1}^{J-1} (a_1 b^{i^\prime_j} a_2)
     \| (a_1^2 b^B a_2^2)^{3m-k-(J-1)} v_{4k} \\
\\
   &\vdashstar&
     v_{4k}
     \prod_{j=1}^{3m-k} (a_1 b^{i^\prime_j} a_2)
     \| v_{4k}
   ~\vdashstar~
     \prod_{j=1}^{3m-k} (a_1 b^{i^\prime_j} a_2)
     \| \varepsilon.
\end{eqnarray*}
This is easily seen to be a correct computation. This
proves Lemma~\ref{lem:EkFk}.
\end{proof}

We can now prove Lemma~\ref{lem:MainIfThen}.
Suppose that $S = \langle n_j\rangle_{j=1}^{3m}$
and that $\langle i_j\rangle_{j=1}^{3m}$ is a permutation
of~$\langle n_j\rangle_{j=1}^{3m}$ witnessing that
$S$ is a ``Yes'' instance of {\sc 3-Partition}.
Let $W_k$~be the string
$\prod_{j=1}^{3m-k+1} (a_1 b^{i^\prime_j} a_2)$
where $i^\prime_1,\ldots,i^\prime_{3m-k+1}$ is the
sequence obtained by removing $k-1$ of the largest
elements of the sequence $\langle i_j\rangle_{j=1}^{3m}$.
(When there are multiple equal values $i_j$, they
can be removed from the sequence in arbitrary fixed
order, say
according to the order they appear in the sequence).
The $n_k$'s are non-increasing, so
the maximum~$i^\prime_j$ is equal to $n_k$.
Therefore, Lemmas \ref{lem:DkDk} and~\ref{lem:EkFk} imply
that
\[
W_k \| v_{4k-3}D_k v_{4k-3}
         v_{4k-2}D_k v_{4k-2}
         v_{4k-1}E_k v_{4k-1}
         v_{4k}  F_k v_{4k}
    ~\vdashstar~ W_{k+1} \| \varepsilon.
\]
Combining these computations for $1\le k \le 3m$ gives
$W_1\| \ver{S} \,\vdashstar \varepsilon\|\varepsilon$.
Lemma~\ref{lem:distributor} gives
$\varepsilon \| w_S \vdashstar  W_1\|\ver{S} $.
Thus $\varepsilon\|w_S \,\vdashstar \varepsilon\|\varepsilon$.
This proves Lemma~\ref{lem:MainIfThen}. \hfill $\qed$
\bigskip

The next lemma gives the converse of Lemma~\ref{lem:MainIfThen},
under the assumption that the V-Condition holds.  This, together
with Theorem~\ref{thm:Vcondition} stating that the V-Condition
must hold,
will prove Theorem~\ref{thm:squareNPcomplete}.
\begin{lem}\label{lem:MainIf}
Let $S$ be an instance of {\sc 3-Partition}
and $\langle i_k \rangle_{k=1}^{3m}$ satisfy the conditions
of Lemma~\ref{lem:distributor}
and $W = \prod_{k=1}^{3m}(a_1b^{i_k}a_2)$.
Suppose that
$W\|\ver{S} \,\vdashstar \varepsilon \| \varepsilon$
with a computation that satisfies the V-Condition,
so
$\varepsilon \,\| w_S \vdashstar \varepsilon \| \varepsilon$
and $w_S$ is in {\sc Square}.
Then $S$~is a ``Yes'' instance
of {\sc 3-Partition}.
\end{lem}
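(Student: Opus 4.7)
The plan is to analyze the computation block-by-block as the input sweeps through the $3m$ blocks of $\ver{S}$, showing inductively that each block extracts exactly one copy of $n_k$ from the queue. Let $W_k$ denote the queue contents just before the $k$-th block $v_{4k-3}D_k v_{4k-3} v_{4k-2}D_k v_{4k-2} v_{4k-1}E_k v_{4k-1} v_{4k}F_k v_{4k}$ is read, so $W_1 = W = \prod_{j=1}^{3m}(a_1 b^{i_j} a_2)$. I will prove by induction on $k$ from $1$ to $3m+1$ that $W_k$ has the form $\prod_{j=1}^{3m-k+1}(a_1 b^{i^{(k)}_j} a_2)$, where the multiset $\{i^{(k)}_j\}_j$ is obtained from $\{i_j\}_{j=1}^{3m}$ by deleting one occurrence each of $n_1,n_2,\ldots,n_{k-1}$. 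The base case $k=1$ is immediate from the hypothesis on $W$.

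For the inductive step, first apply Lemma~\ref{lem:DkDk} to the subcomputation on $v_{4k-3}D_k v_{4k-3} v_{4k-2}D_k v_{4k-2}$. Since the overall computation accepts and the V-Condition holds, the lemma guarantees that the queue is unchanged (still $W_k$) after this portion, and it forces the bound $i^{(k)}_j \le n_k$ for every $j$. Next, apply the converse of Lemma~\ref{lem:EkFk}, namely Lemma~\ref{lem:EkFkconverse} promised in the paper, to the subcomputation on $v_{4k-1}E_k v_{4k-1} v_{4k}F_k v_{4k}$ under the V-Condition; this yields $\max_j i^{(k)}_j = n_k$ and asserts that $W_{k+1}$ is $W_k$ with one copy of that maximum deleted. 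Since the $n_k$'s are non-increasing, $n_k$ is an upper bound on the maximum by the previous step, so equality forces the removed element to be precisely $n_k$, closing the induction.

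After $3m$ iterations, $W_{3m+1}$ is empty and we have extracted the multiset $\{n_1,\ldots,n_{3m}\}$ from $\{i_1,\ldots,i_{3m}\}$, so the two sequences coincide as multisets; that is, $\langle i_k\rangle_{k=1}^{3m}$ is a permutation of $\langle n_k\rangle_{k=1}^{3m}$. Combined with the triple-sum identity $i_{3j-2}+i_{3j-1}+i_{3j}=B$ from Lemma~\ref{lem:distributor}, the groups $S_j=\{i_{3j-2},i_{3j-1},i_{3j}\}$ for $1\le j\le m$ form a valid partition of $S$ into $m$ triples each summing to $B$, so $S$ is a ``Yes'' instance of {\sc 3-Partition}.

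The main obstacle of this plan is the converse Lemma~\ref{lem:EkFkconverse}, whose proof must exploit the asymmetric design of $E_k = U_B^{3m-k}\,a_1 b^{n_k} a_2\,U_B^{3m-k}$: the isolated factor $a_1 b^{n_k} a_2$ in the middle of $E_k$ must, under the V-Condition and non-nesting, match precisely one $a_1 b^{i^{(k)}_J} a_2$ from the queue, pinning $i^{(k)}_J = n_k$. The surrounding $U_B$'s together with $F_k = U_B^{2(3m-k)}$ must then account for the remaining elements of $W_k$, showing that they all reappear unchanged in $W_{k+1}$ and that only the chosen maximum is truly removed. Once this converse is in hand, the inductive sweep above is essentially automatic.
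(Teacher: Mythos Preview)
Your proposal is correct and follows essentially the same approach as the paper: define the intermediate queue contents $W_k$ (the paper calls them $V_k$, with an auxiliary $V_k'$ after the $D_k$-phase), and prove by induction on~$k$ using Lemma~\ref{lem:DkDk} for the bound $i^{(k)}_j\le n_k$ and Lemma~\ref{lem:EkFkconverse} to extract one copy of~$n_k$, concluding that $\langle i_k\rangle$ is a permutation of $\langle n_k\rangle$. Your final paragraph's sketch of why Lemma~\ref{lem:EkFkconverse} should hold is a bit oversimplified---the paper's actual argument hinges on counting alternations of $a_1,a_2$ across several cases rather than a direct ``the middle factor must match one queue block'' claim---but since you are invoking that lemma as a black box here, this does not affect the correctness of your proof of Lemma~\ref{lem:MainIf}.
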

The main new tool needed for proving Lemma~\ref{lem:MainIf}
is a converse of Lemma~\ref{lem:EkFk}:
\begin{lem}\label{lem:EkFkconverse}
Let $1\le k\le 3m$,
let $i_1,\ldots,i_{3m-k+1}$ be natural numbers, and
$W_k = \prod_{j=1}^{3m-k+1} (a_1 b^{i_j} a_2)$.
Suppose that $\max_j\{i_j\} \le n_k$.
Further suppose there is a computation
\begin{equation}\label{eq:EkFkLemmaConverse}
W_k \| v_{4k-1}E_k v_{4k-1} v_{4k}F_k v_{4k} (\cdots)
~\vdashstar~ W_{k+1} \| (\cdots)
\end{equation}
that satisfies the V-Condition.
Then there is a $J$ such that $i_J = \max_j \{i_j\} = n_k$
such that, letting
$i^\prime_1,\ldots,i^\prime_{3m-k}$
be the sequence $\langle i_j \rangle _j$ with $i_J$ omitted,
we have $W_{k+1} = \prod_{j=1}^{3m-k} (a_1 b^{i^\prime_j} a_2)$.
\end{lem}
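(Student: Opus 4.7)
My plan is to leverage the V-Condition to decompose the computation~(\ref{eq:EkFkLemmaConverse}) into well-defined phases, extract two shuffle identities, and then use alphabet counting together with a block-by-block structural analysis of $F_k$ to force $z_2 = W_{k+1}$ into the required form. First I would argue, using the non-nesting property and the V-Condition, that the computation must pass through specific intermediate states. Since neither $W_k$ nor (after the $E_k$ phase) the eventual intermediate queue string contains any of the symbols $c_1,c_2,x,y$, the first $v_{4k-1}$ of the input must be pushed in its entirety onto the queue, and similarly for the first $v_{4k}$. The V-Condition then forces the second $v_{4k-1}$ to be matched against the pushed first $v_{4k-1}$; by non-nesting, all of $W_k$ must have been consumed by the time this match occurs, so that the queue state immediately after the match is some string $z_1$ sitting at the back of the queue. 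An identical argument for $v_{4k}$ and $F_k$ yields a string $z_2$ with $W_{k+1} = z_2$, and Lemma~\ref{lem:consumesubsequence} gives the two shuffle identities $E_k = W_k\odot z_1$ and $F_k = z_1\odot z_2$.

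Second, I would perform the straightforward alphabet counting. Writing $L = 3m-k$, these identities give $|z_1|_{a_1}=|z_1|_{a_2}=3L$ and $|z_1|_b = 2BL + n_k - \sum_j i_j$, and correspondingly $|z_2|_{a_1}=|z_2|_{a_2}=L$ and $|z_2|_b = \sum_j i_j - n_k$. Since $z_1$ must be a subsequence of $F_k$ (which has only $2BL$ many $b$'s), we obtain the feasibility inequality $\sum_j i_j \ge n_k$.

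Third---and this is the step I expect to be the main obstacle---I would analyse how the $4L$ many $a_1$'s and $4L$ many $a_2$'s of $F_k = (a_1^2 b^B a_2^2)^{2L}$ distribute between $z_1$ and $z_2$, block by block. I expect to show that the only distributions consistent with both the counts of Step~2 and the shuffle identity $E_k = W_k \odot z_1$ (whose constraints come from the specific $a$-alternation pattern $(a_1^2 a_2^2)^L (a_1 a_2)(a_1^2 a_2^2)^L$ of $E_k$) are those in which each $U_B$-factor of $F_k$ contributes at most one $a_1$ and at most one $a_2$ to $z_2$, and exactly $L$ of the $2L$ such factors contribute a nontrivial word of the form $a_1 b^{i'_j}a_2$. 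This rigidity forces $z_2$ into the required form $\prod_{j=1}^{L} (a_1 b^{i'_j}a_2)$. To finish, I would revisit $E_k = W_k \odot z_1$ and argue that the middle factor $a_1 b^{n_k} a_2$ of $E_k$ must be absorbed in its entirety by a single block $a_1 b^{i_J} a_2$ of $W_k$---otherwise the surplus $b$'s inherited by $z_1$ would exceed what $F_k$ can subsequently absorb, contradicting Step~2---thereby pinning down $i_J = n_k$ and identifying $\langle i'_j\rangle$ with $\langle i_j\rangle$ minus $i_J$. The main difficulty lies in the rigidity argument: ruling out every non-alternating configuration of $z_2$ appears to require carefully synchronising the two shuffle decompositions $E_k = W_k\odot z_1$ and $F_k = z_1\odot z_2$ through the intermediate string $z_1$ and exploiting the very restricted $a$-alternation patterns of both $E_k$ and $F_k$.
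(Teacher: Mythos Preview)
Your decomposition via the V-Condition into the two shuffle identities $E_k = W_k\odot z_1$ and $F_k = z_1\odot z_2$ is correct and matches how the paper begins (the paper calls your $z_1$ by the name~$Z$). The symbol counts you derive are also correct. The difficulty, as you anticipate, is entirely in Step~3, and here your sketch contains a concrete error and a genuine gap.

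The error is in your ``To finish'' paragraph. You claim that if the middle block $a_1 b^{n_k} a_2$ of~$E_k$ is not absorbed in its entirety by a single block $a_1 b^{i_J} a_2$ of~$W_k$, then ``the surplus $b$'s inherited by $z_1$ would exceed what $F_k$ can subsequently absorb, contradicting Step~2.'' But the total $b$-count of~$z_1$ is $2BL + n_k - \sum_j i_j$ \emph{regardless} of how the shuffle $E_k = W_k\odot z_1$ is realised---it is fixed by pure counting and carries no information about which block of~$W_k$ meets the middle factor. So global $b$-counting cannot isolate the index~$J$ or force $i_J = n_k$. The paper's argument here is not a count but a \emph{subsequence obstruction}: it tracks the number of $a_1,a_2$-alternations in~$z_1$ via a case analysis on which contiguous run $Y_j$ of $E_k$-blocks is consumed by each $a_1 b^{i_j} a_2$, observes that $E_k$ has $2L+1$ alternations while $F_k$ has only $2L$, and shows that the only case that drops an alternation is the one where some $a_1 b^{i_J} a_2$ consumes exactly the middle block. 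A further subsequence test---that $(a_1 a_2)^{L} b (a_1 a_2)^{L}$ is not a subsequence of~$F_k$---then rules out the partial-match subcase and forces $i_J = n_k$.

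The gap is in the identification of the individual values $i'_j$. Your counting gives only $\sum_j i'_j = \sum_j i_j - n_k$; it does not show that the sequence $\langle i'_j\rangle$ is literally $\langle i_j\rangle$ with one entry deleted, in order. The paper obtains this by following each resultant piece $Z_j$ of~$z_1$ through the second shuffle: because $z_1$ and $F_k$ have the \emph{same} number $2L$ of $a_1,a_2$-alternations, each $Z_j$ (with $\ell$ alternations) is forced to consume a subword $G_j = U_B^{\ell}$ of~$F_k$, and one checks directly that the resultant of that consumption is $a_1 b^{i_j} a_2$. This block-level bookkeeping, not a global count, is what pins down $W_{k+1}$ exactly. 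Your plan to ``analyse how the $a_1$'s and $a_2$'s of $F_k$ distribute between $z_1$ and $z_2$'' may eventually reconstruct this, but as stated it does not yet contain the alternation invariant that makes the argument go through.
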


Before we prove Lemma~\ref{lem:EkFkconverse},
we indicate how it, and the V-Condition assumption,
imply Lemma~\ref{lem:MainIf} and thus
imply Theorem~\ref{thm:squareNPcomplete}.
Suppose $C$ is a computation
$\varepsilon\|w_S \,\vdashstar \varepsilon \| \varepsilon$
that obeys the V-Condition.  For $1\le k \le 3m+1$,
define the strings
$V_k$ to be such that $C$ contains the configurations
\[
V_k ~\|~ \prod_{\ell=k}^{3m} \left[
         v_{4\ell-3}D_\ell v_{4\ell-3}
         v_{4\ell-2}D_\ell v_{4\ell-2}
         v_{4\ell-1}E_\ell v_{4\ell-1}
         v_{4\ell}  F_\ell v_{4\ell} \right].
\]
Of course, these $V_k$'s are the intermediate queue contents
as $\ver{S}$ is consumed.  For $1\le k \le 3m$,
define
$V_k^\prime$ to be the strings such that $C$ contains
the configuration
\begin{eqnarray*}
\lefteqn{V^\prime_k ~\|~ v_{4k-1}E_k v_{4k-1}
         v_{4k}  F_k v_{4k}}
\\
&& \quad\quad\quad
      \prod_{\ell=k+1}^{3m} \left[
         v_{4\ell-3}D_\ell v_{4\ell-3}
         v_{4\ell-2}D_\ell v_{4\ell-2}
         v_{4\ell-1}E_\ell v_{4\ell-1}
         v_{4\ell}  F_\ell v_{4\ell} \right].
\end{eqnarray*}
\begin{clm}\label{clm:MainThmPf}
We have:
\begin{description}
\setlength{\parskip}{0pt}
\setlength{\itemsep}{0pt}
\setlength{\parsep}{0pt}
\item[\rm (a)] $V_1$ is equal to
$\prod_{j=1}^{3m-k+1}(a_1 b^{i_j} a_2)$
for some sequence $\langle i_j \rangle_{j=1}^{3m}$
satisfying~{\rm (\ref{eq:condontrip})}.
\item[\rm (b)] For $1\le k\le 3m+1$,
$V_k$ equals $\prod_{j=1}^{3m-k+1}(a_1 b^{i^\prime_j} a_2)$
for some sequence $\langle i^\prime_j\rangle_j$ which
is obtained from $\langle i_j \rangle_{j=1}^{3m}$
by removing (instances of) the
$k-1$ largest entries of $\langle n_j\rangle_{j=1}^{3m}$.
\item[\rm (c)] For $1\le k \le 3m$,
$V^\prime_k$ equals~$V_k$,
and its maximum $i^\prime_j$ value
is less than or equal to~$n_k$.
\end{description}
\end{clm}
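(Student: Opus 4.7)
The plan is to prove (a) by a direct appeal to Lemma~\ref{lem:distributor}, and then to prove (b) and (c) simultaneously by induction on~$k$, using Lemma~\ref{lem:DkDk} to carry out the inductive step for (c) and Lemma~\ref{lem:EkFkconverse} to carry out the inductive step for (b). Part (a) is immediate: $V_1$ is the queue contents at the moment the input pointer first reaches $\ver{S}$, and Lemma~\ref{lem:distributor} forces $V_1$ to have the stated product form for some $\langle i_j\rangle_{j=1}^{3m}$ satisfying~(\ref{eq:condontrip}). This sequence is fixed for the remainder of the argument.

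The base case of (b) at $k=1$ is just (a), since removing zero of the largest $n_j$'s from $\langle i_j\rangle_{j=1}^{3m}$ is a no-op. For the inductive step, assume (b) holds at index~$k$, so $V_k = \prod_{j=1}^{3m-k+1}(a_1 b^{i'_j} a_2)$ where $\langle i'_j\rangle$ is $\langle i_j\rangle_{j=1}^{3m}$ with one instance of each of $n_1,\ldots,n_{k-1}$ deleted. The subcomputation that transforms $V_k\,\|\,\cdots$ into $V'_k\,\|\,\cdots$ processes exactly $v_{4k-3}D_k v_{4k-3} v_{4k-2}D_k v_{4k-2}$, and this fits the hypothesis of Lemma~\ref{lem:DkDk} (using the V-Condition). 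Lemma~\ref{lem:DkDk} then yields $V'_k = V_k$ together with the bound $i'_j \le n_k$ for every~$j$, which is exactly (c) at index~$k$.

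Having (c) at index~$k$, I apply Lemma~\ref{lem:EkFkconverse} with $W_k := V'_k = V_k$ to the subcomputation that takes $V'_k\,\|\,\cdots$ to $V_{k+1}\,\|\,\cdots$, namely the processing of $v_{4k-1}E_k v_{4k-1} v_{4k} F_k v_{4k}$. The lemma produces an index~$J$ with $i'_J = n_k$ and forces $V_{k+1} = \prod_{j\ne J}(a_1 b^{i'_j} a_2)$. Since $\langle n_j\rangle$ is non-increasing, $n_k$ is a largest element of $\{n_k,n_{k+1},\ldots,n_{3m}\}$, so the multiset obtained by deleting this instance of~$n_k$ from the multiset backing $V_k$ is precisely $\langle i_j\rangle_{j=1}^{3m}$ minus one instance of each of $n_1,\ldots,n_k$. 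This is (b) at $k+1$, and the induction proceeds.

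The only delicate point — and the place I would expect a reader to want extra care — is the multiset bookkeeping in the inductive step for (b): we must argue that the value $n_k$ produced by Lemma~\ref{lem:EkFkconverse} can legitimately be identified with the ``$k$-th largest of $\langle n_j\rangle$'' in the prescribed removal ordering. This works because the hypothesis $\max_j i'_j \le n_k$ from (c), combined with the monotonicity of $\langle n_j\rangle$, guarantees that $n_k$ is simultaneously an element of $V_k$'s multiset and the next entry in the fixed removal order; so deletion of one instance of $n_k$ exactly matches the description in~(b). Everything else reduces to invoking the two earlier lemmas in the correct order, with no further computation required.
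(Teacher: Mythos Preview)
Your proof is correct and follows essentially the same approach as the paper: induction on~$k$, with (a) and the base case of~(b) coming from Lemma~\ref{lem:distributor}, (c) at each~$k$ from Lemma~\ref{lem:DkDk} applied under the inductive hypothesis (b) at~$k$, and (b) at~$k+1$ from Lemma~\ref{lem:EkFkconverse} using (c) at~$k$. Your explicit treatment of the multiset bookkeeping in the last paragraph is a nice addition that the paper leaves implicit.
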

The claim is proved by induction on~$k$.  Part~(a), and the
equivalent $k=1$
case of~(b), follows from Lemma~\ref{lem:distributor}.
Part~(c) for a given~$k$ follows
from Lemma~\ref{lem:DkDk} and
from the induction hypothesis that
(b) holds for
the same value of~$k$.  Part~(b) for $k>1$
follows from Lemma~\ref{lem:EkFkconverse} and
from the induction hypothesis that (b) and~(c) hold
for $k-1$.
Since $V_{3m+1} = \varepsilon$,
part~(b) implies that the sequence
$\langle i_j\rangle_{j=1}^{3m}$
is a reordering of
$\langle n_j\rangle_{j=1}^{3m}$.
And, since (\ref{eq:condontrip}) holds,
$\langle i_j\rangle_{j=1}^{3m}$ witnesses that
$S$ is a ``Yes'' instance of {\sc 3-Partition}.
This completes the proof of Lemma~\ref{lem:MainIf},
and thereby Theorem~\ref{thm:squareNPcomplete},
modulo the proofs of Lemma~\ref{lem:EkFkconverse} and
Theorem~\ref{thm:Vcondition}.
\hfill $\qed$

\begin{proof}
(of Lemma~\ref{lem:EkFkconverse}.)
Consider a particular computation~$C$ as
in~(\ref{eq:EkFkLemmaConverse}) that satisfies
the V-Condition.  $C$ has
the form
\[
W_k \| v_{4k-1}E_k v_{4k-1} v_{4k}F_k v_{4k} (\cdots)
~\vdashstar~ Z \| v_{4k}F_k v_{4k}
~\vdashstar~ W_{k+1} \| (\cdots)
\]
where $Z$ is the resultant of $E_k$ being
subsumed by $W_k$.  By assumption, $W_k$
has $3m-k+1$ alternations of $a_1, a_2$, whereas
$E_k$ has $2(3m-k)+1$ and $F_k$ has $2(3m-k)$.
The string $E_k$ is a concatenation of ``blocks''
of the form $a_1 b^{n_k} a_2$
or the form $U_B = a_1^2 b^B a_2^2$.
Each subword
$a_1 b^{i_j} a_2$ in~$W$ has its symbol~$a_1$
matched by some~$a_1$ in~$E_k$ and its $a_2$ matched
by some $a_2$ in the same block or a later block
of~$E_k$: these symbols $a_1$ and~$a_2$ in~$E_k$
determine a contiguous sequence of blocks in~$E_k$
which is consumed by $a_1 b^{i_j} a_2$.  We call
these blocks the ``$j$-consumed'' portion of~$E_k$,
and denote it $Y_j$.
The resultant of $a_1 b^{i_j} a_2$ and its
$j$-consumed portion is denoted~$Z_j$.
There may also be blocks of~$E_k$ which are not
part of any $j$-consumed portion, and these
are called ``non-matched'' blocks of~$E_k$.
The string $Z$ is then the concatenation
of the words~$Z_j$, for $1\le j\le 3m-k+1$,
interspersed with the
non-matched blocks of~$E_k$.

Let us consider the possible resultants~$Z_j$.
We can write $E_k$ as $E_k = P_1 P_2 P_3$
where $P_1 = P_3 = U_B^{3m-k}$ and
$P_2 = a_1 b^{n_k} a_2$.  There are several cases
to consider.
\begin{description}
\item[{\em Case a.}] $Y_j$
is $(a_1^2 b^B a_2^2)^\ell$ for
some $\ell\ge 1$, and thus is a
subword of either $P_1$ or~$P_3$
in~$E_k$.  When $Y_j$ is consumed
by $a_1 b^{i_j}a_2$, one of the two initial~$a_1$'s,
any $i_j$ of the $b$'s,
and then one of the two final~$a_2$'s are matched;
the remaining symbols of~$Y_j$ become
the resultant~$Z_j$ and are pushed onto the
queue.  Therefore, $Z_j$ is equal to
\begin{equation}\label{eq:caseA}
Z_j ~=~
   a_1 b^{B-m_1}
   \prod_{s=2}^\ell (a_2^2 a_1^2 b^{B-m_s})
   a_2
\end{equation}
where $m_1 + m_2 + \cdots + m_\ell = i_j$.
Note that $Y_j$ and~$Z_j$ both have $\ell$ alternations
of $a_1, a_2$.
\item[{\em Case b.}] $Y_j$ spans from $P_1$ to $P_3$
and equals
$(a_1^2 b^B a_2^2)^{\ell_1} a_1 b^{n_k} a_2 (a_1^2 b^B a_2^2)^{\ell_2}$
where $\ell_1,\ell_2\ge 1$.  Arguing as in the previous case,
$Z_j$ is equal to
\[
a_1 b^{B-m_1}
   \prod_{s=2}^{\ell_1} (a_2^2 a_1^2 b^{B-m_s})
   a_1 b^{n_k-m_{\ell_1+1}} a_2
   \prod_{s=\ell_1+2}^{\ell_1+\ell_2+1} (a_2^2 a_1^2 b^{B-m_s})
   a_2
\]
where $m_1 + m_2 + \cdots + m_{\ell_1+\ell_2+1} = i_j$.
In this case, $Y_j$ and $Z_j$ both have
$\ell_1+\ell_2$ alternations of $a_1, a_2$.
\item[{\em Case c.}] $Y_j$ is $a_1 b^{n_k} a_2$,
namely, $Y_j = P_2$.
In this case, $Z_j$ is equal to just $b^{n_k-i_j}$.
If $i_j = n_k$, then $Z_j$ is just $\varepsilon$: this
is called a ``full cancellation'' case.
Note that $Z_j$ has zero alternations of $a_1,a_2$,
whereas $Y_j$ has one alternation.
\item[{\em Case d.}] $Y_j$ is
$(a_1^2 b^B a_2^2)^\ell a_1 b^{n_k} a_2$.
We now have
\begin{equation}\label{eq:caseD}
Z_j ~=~ a_1 b^{B-m_1} \prod_{s=2}^\ell (a_2^2a_1^2 b^{B-m_s})
        a_2^2 a_1 b^{n_k-m_{\ell+1}}.
\end{equation}
where $m_1+\cdots+m_{\ell+1} = i_j$.
$Z_j$ consists of a part with $\ell$ alternations
of~$a_1,a_2$ followed by a subsequent $a_1$ (and possibly $b$'s).
In the ``full cancellation'' case, $m_{\ell+1} = n_k$, and since $i_j \le n_k$,
we have $n_k = m_{s+1} = i_j$ and, for $s\le \ell$, $m_s = 0$.
Otherwise, $Z_j$ ends with one or more $b$'s.
\item[{\em Case e.}] The case where $Y_j$ is
$a_1 b^{n_k} a_2 (a_1^2 b^B a_2^2)^\ell$
is completely analogous to case~d., and we omit it.
\end{description}
For simplicity, let's assume for the moment that
neither case d.\ nor~e.\ occurs.
This means that there is at most one occurrence
of either case b.\ or~c., and the rest of the
cases are case~a.
In cases a.\ and~b., $Z_j$ has the same number
of alternations of $a_1, a_2$ as $Y_j$.
Of course the number of alternations in the non-matched
blocks does not change.  Therefore, $Z$ has
$2(3m-k)+1$ alternations of
$a_1,a_2$ if case~c. does not occur,
and has $2(3m-k)$ alternations if
case~c.\ does occur.
The word~$F_k$ has $2(3m-k)$ alternations
of $a_1,a_2$, and since $F_k$ is consumed
by~$Z$, Lemma~\ref{lem:consumesubsequence}
implies that $Z$ cannot have more alternations
of $a_1, a_2$ than~$F_k$.  Therefore, it must be that
case~c.\ occurs and case~b.\ does not.

We claim that case~c.\ must occur as a
full cancellation case.  If not, then $Z$ will consist
of a subword with $3m-k$ alternations
of $a_1,a_2$ that came from~$P_1$,
followed by some non-zero number of~$b$'s
from the $Z_j$ of case~c., and then by another subword with $3m-k$
alternations of $a_1,a_2$ that came from~$P_3$.
In other words, $(a_1 a_2)^{3m-k} b (a_1 a_2)^{3m-k}$
is a subsequence of~$Z$.
It is not, however, a subsequence of~$F_k$, contradicting
the fact that
$F_k$ is consumed by~$Z$.  If follows
that case~c.\ must have occurred in the full cancellation
version.  Let $J$ be the value of~$j$ for which case~c.\
occurred; since it was a case of full cancellation,
$i_J = n_k$.

Therefore, $Z$ has $2(3m-k)$ alternations of $a_1,a_2$,
and is the concatenation of the $3m-k$ many $Z_j$'s
that arose in case~a.\ (the empty~$Z_J$ has been dropped)
and of zero or more non-matched $a_1^2 b^B a_2^2$'s.
The fact that $F_k$ and $Z$ both have $2(3m-k)$ alternations
of $a_1, a_2$, means that the way $F_k$ can be consumed by~$Z$
is tightly constrained.  First, any non-matched block
$a_1^2 b^B a_2^2$ in~$Z$ must consume (and fully match)
an identical block in~$F_k$ leaving a resultant of~$\varepsilon$.
Second, any $Z_j$ with $\ell$ alternations of $a_1,a_2$ will
be of the form~(\ref{eq:caseA}) and must consume a subword
$G_j = (a_1^2 b^B a_2^2)^\ell$ of~$F_k$.  The first $a_1$ of~$Z_j$
must match one of the two first $a_1$'s of~$G_j$;
the final $a_2$ of~$Z_j$ must match one the final two $a_2$'s
of~$G_j$; the other subwords $a_1^2$ and $a_2^2$ of~$Z_j$ must
match identical subwords in~$G_j$; and the $\ell B-i_j$~many $b$'s in~$Z_j$
all must match $b$'s in~$G_j$.  This can always be done, no matter
what the values of the~$m_s$'s in $Z_j$ are.
Since $G_j$ has $\ell B$~many $b$'s,
the consumption of~$G_j$ by~$Z_j$ yields a
resultant $W^\prime_j$ equal to $a_1 b^{i_j} a_2$.

It follows that, when $F_k$ is consumed by $Z$, the resultant equals
the concatenation of the strings $W^\prime_j = a_1 b^{i_j} a_2$,
omitting the word $w_J$ (which triggered case~c.).  In other words,
the resultant is
just~$W_{k+1}$,
proving Lemma~\ref{lem:EkFkconverse} in this case.

We still have to consider the case where case d.\ or~e. occurs.
The cases are symmetric, so suppose case~d.\ occurs, and thus
the rest of the $Z_j$'s are generated by case~a.
Suppose $Z_j$ is obtained via case~d.,
and so is equal to~(\ref{eq:caseD}).  We claim that
this must be a full cancellation case of case~d., with
$n_k = i_j$.  If not, then $Z$ contains $3m-k$ alternations
of $a_1,a_2$ up through $Z_j$, followed by the final $a_1$
of~$Z_j$ and at least one $b$ at the end of~$Z_j$, and then
followed by $3m-k$ alternations of $a_1,a_2$ in the remaining
part of~$Z$.  In other words, $(a_1a_2)^{3m-k} a_1 b (a_1a_2)^{3m-k}$
is a subsequence of~$Z$.  It is not
a subsequence of~$F_k$ however, contradicting
the fact that $F_k$ is to be
consumed by~$Z$.  Thus we must have a full cancellation case
of case~d.

Now consider
what immediately follows $Z_j$ in~$Z$.  It must
either be of the form $a_1^2 b^B a_2^2$ (obtained
from a non-matched block), or, referring to~(\ref{eq:caseA}),
be the word of the
form
\[
Z_{j+1}
  ~=~ a_1 b^{B-m^\prime_1}
      \prod_{s=2}^{\ell^\prime} (a_2^2 a_1^2 b^{B-m^\prime_s})a_2.
\]
obtained from case~a.\ for $Y_{j+1}$.
We claim it is impossible for $Z_j a_1^2 b^B a_2^2$ to
be a subword of~$Z$.  If so,
$(a_1a_2)^{3m-k} a_1^2 (a_1a_2)^{3m-k}$ is a
subsequence of~$Z$, and thus
$Z$ is not a subsequence of~$F_k$.  As before,
this is a contradiction.

We have eliminated the other possibilities,
so $Z_j Z_{j+1}$ is a subword of~$Z$ and $n_k = m_{\ell+1}$.
Therefore, $m_s = 0$ for all $s\le \ell$,
and we have
\[
Z_j Z_{j+1} ~=~ a_1 b^B (a_2^2 a_1^2 b^B)^\ell
         \prod_{s=1}^{\ell^\prime}
         (a_2^2 a_1^2 b^{B-m^\prime_s})a_2.
\]
Note that $Z_j Z_{j+1}$ contains $\ell+\ell^\prime$
alternations of $a_1,a_2$.  Also note that
the subword $Y_j Y_{j+1}$ contains $\ell+\ell^\prime+1$
many such alternations.  Therefore $Z$ has $3m-k$ alternations
of $a_1,a_2$, namely one fewer than~$E_k$ (as desired).
Similarly to the argument four paragraphs above,
it follows that $Z_j Z_{j+1}$ must consume a subword~$G$
of~$F_k$ of the form $(a_1^2 b^B a_2^2)^{\ell+\ell^\prime}$.
Since \hbox{$m_1^\prime+\cdots +m^\prime_{\ell^\prime} = i_{j+1}$},
$Z_j Z_{j+1}$~has $(\ell +\ell^\prime)B-i_j$ many $b$'s.
Hence the resultant when $G$ is consumed by~$Z_j$ is
equal to $a_1 b^{i_{j+1}} a_2$.  If follows again that
when $F_k$ is consumed by~$Z$ it yields the resultant~$W_{k+1}$
as desired.

This completes the proof of Lemma~\ref{lem:EkFkconverse}.
\end{proof}

\paragraph*{The V-Condition.}
The
proof of Theorem~\ref{thm:squareNPcomplete}
will be finalized once we prove
that the V-Condition must hold:
\begin{thm}\label{thm:Vcondition}
Any accepting computation
$\varepsilon || w_S \,\vdashstar \varepsilon \| \varepsilon$
satisfies the V-condition.
\end{thm}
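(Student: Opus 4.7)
The plan is to prove Theorem~\ref{thm:Vcondition} by strong induction on the iteration index $k$ (running $1,\ldots,3m$), where iteration $k$ of $\ver{S}$ is the product factor
\[
v_{4k-3}D_k v_{4k-3}\, v_{4k-2}D_k v_{4k-2}\, v_{4k-1}E_k v_{4k-1}\, v_{4k}F_k v_{4k}.
\]
The inductive hypothesis is that the V-Condition holds for every $v_\ell, v_\ell$ pair within iterations $1,\ldots,k-1$. Under this hypothesis, the analysis in Lemmas~\ref{lem:DkDk} and~\ref{lem:EkFkconverse} (whose conclusions apply locally once V-Condition is known for the sub-computation in question) guarantees that the queue at the start of iteration~$k$ contains only $a_1, a_2, b$ symbols, in the specific form $\prod_j (a_1 b^{i'_j} a_2)$, with no residual $c_1, c_2, x, y$ symbols left over from prior iterations.

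Within iteration~$k$, I would argue pair-by-pair. When reading the first $v_{4k-3}$, the queue head is an $a$-type symbol and hence all four of its symbols $c_1, x^{\ell}, y^{\ell}, c_2$ are pushed. I then claim that the second $v_{4k-3}$ must fully consume this pushed copy from the head of the queue: any alternative behavior pushes some of the second $v_{4k-3}$'s symbols instead of matching them, leaving $c_1, c_2, x, y$ residue on the queue. By Lemma~\ref{lem:consumesubsequence}, such residue must eventually be matched by $c_1, c_2, x, y$ symbols from some later $v_{\ell'}$ (with $\ell' > 4k-3$), since these four symbol types appear nowhere else in $w_S$. The resulting long-range $c_1, c_2, x, y$ edge spans the second $v_{4k-3}$ and (parts of) the subsequent $D_k, v_{4k-2}$, etc., and it necessarily nests a forced $a_1, a_2, b$ edge of the intervening $D_k, E_k, F_k$ blocks — an edge whose presence is guaranteed by the queue automaton's need to eventually clear those blocks' symbols. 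After V-Condition is established for $v_{4k-3}$, the same analysis applied sequentially to $v_{4k-2}, v_{4k-1}, v_{4k}$ (with second $D_k$, then $E_k$, then $F_k$ playing the role of clearing the queue back to a clean $a$-only state between pairs) completes the inductive step.

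The main obstacle is making the nesting contradiction rigorous in the ``stranded residue'' step. A single stranded symbol of type $c_1, c_2, x,$ or $y$ creates a long-range edge spanning multiple blocks, and one must pinpoint a specific $a_1, a_2, b$ edge in the interior that is necessarily nested. I expect the cleanest formalization to exploit the second $D_k$ within iteration~$k$: any accepting computation must consume the second $D_k$'s $a_1, a_2, b$ symbols somehow, which forces $a_1, a_2, b$ edges between the first and second $D_k$ (or from second $D_k$ back to the queue content at start of iteration~$k$). These edges lie strictly inside the hypothetical long-range $c_1, c_2, x, y$ edge and hence are nested by it, yielding the contradiction. The case analysis must handle each of the four symbol types $c_1, c_2, x, y$ separately, and in the $v_{4k-2}, v_{4k-1}, v_{4k}$ cases, the analogous role is played by the subsequent second $D_k$, $E_k$, and $F_k$ block structure respectively.
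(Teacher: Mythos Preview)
Your proposal has a genuine gap at the ``nesting contradiction'' step, and more fundamentally it never uses the specific form $v_\ell = c_1 x^\ell y^\ell c_2$ with its alternating $x$/$y$ blocks and varying exponents---yet this structure is essential. The paper explicitly remarks (just after Theorem~\ref{thm:Vsolution}, referring to Figure~\ref{fig:matchingexample}) that if the $v_j$'s were merely $c_1 x^j c_2$, the V-Condition would \emph{fail}. Since your argument does not distinguish these two definitions of~$v_\ell$, it cannot be correct as written. Concretely, the nesting claim does not go through: suppose a symbol~$\sigma$ in the first~$v_{4k-3}$ is left unmatched through the second~$v_{4k-3}$ and is eventually matched by a symbol in the first~$v_{4k-2}$. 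The $a_1,a_2,b$ symbols of the first~$D_k$ lying inside this span are matched either against the initial queue contents~$W_k$ (positions \emph{before}~$\sigma$) or against the second~$D_k$ (positions \emph{after} $\sigma$'s match, since the second~$D_k$ follows the first~$v_{4k-2}$). In both cases those $a$-edges \emph{cross} the $\sigma$-edge rather than nest inside it, so no contradiction arises. Your ``main obstacle'' paragraph correctly flags this as the hard step but then simply asserts the needed edges ``lie strictly inside''; that is precisely what fails.

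The paper's proof proceeds entirely differently. Because matching edges connect only equal symbols, any square shuffle of~$w_S$ restricts to a square shuffle of the subsequence~$V$ of all $c_1,x,y,c_2$ symbols (equivalently, of $v_\ell v_\ell\cdots v_2 v_2 v_1 v_1$). The paper then shows directly (Theorem~\ref{thm:Vsolution}) that the \emph{only} square shuffle of~$V$ matches each~$v_k$ with its twin. The mechanism is an infinite-recurrence argument exploiting the $x^\ell y^\ell$ pattern: if the leading pair $v_\ell,v_\ell$ is not completely matched, the queue is shown to contain two full $x$-blocks or two full $y$-blocks (Lemma~\ref{lem:Vexiststwofull}); and once two full $x$-blocks are present, each can only be cleared by $x$'s from at least two strictly shorter later $x$-blocks, forcing the intervening $y$-block onto the queue whole---so two full $y$-blocks appear, and vice versa (Lemma~\ref{lem:Vnexttwoblocks}). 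This perpetual regeneration contradicts termination. The $a_1,a_2,b$ symbols play no role whatsoever in this argument.
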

Let
\begin{equation}\label{eq:V}
V ~=~ \prod_{i=0}^{\ell-1} v_{\ell-i}^2
  ~=~ \prod_{j=\ell,\ldots,2,1} (c_1 x^j y^j c_2)^2,
\end{equation}
i.e.,
$V = v_\ell v_\ell \cdots v_2 v_2 v_1 v_1$.
(The dependence of~$V$ on~$\ell$ is suppressed in the notation.)
The
symbols $c_1, x, y, c_2$ occur only in the subwords $v_\ell$
of~$w_S$, and $V$ is the subsequence of~$w_S$ containing
these symbols, but in reversed order.  (We use the reversed
order since it makes the proof below a little simpler to state.)
Clearly, any expression of~$w_S$
as a square shuffle induces a square shuffle
for~$V$.
Therefore Theorem~\ref{thm:Vcondition} is a consequence of
Theorem~\ref{thm:Vsolution}:
\begin{thm}\label{thm:Vsolution}
Let $\ell \ge 1$.
The only accepting computation
$\varepsilon \| V \,\vdashstar \varepsilon \| \varepsilon$
is the one that matches each $v_k$ in~$V$ with the other~$v_k$
in~$V$.
\end{thm}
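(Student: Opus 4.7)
My plan is to prove the theorem by induction on~$\ell$. The base case $\ell = 1$ is a direct verification of the queue automaton on $V = v_1 v_1 = c_1 x y c_2 c_1 x y c_2$: the queue front stays $c_1$ throughout the reading of the first $v_1$ (since $x$, $y$, $c_2 \neq c_1$), so every symbol of the first $v_1$ must be pushed; then each symbol of the second $v_1$ matches the current queue front in turn, yielding the intended matching.

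For the inductive step, assume the theorem for $\ell - 1$ and consider $V = v_\ell^2 V_{\ell-1}$, where $V_{\ell-1} = v_{\ell-1}^2 v_{\ell-2}^2 \cdots v_1^2$. By the same forced-push argument as in the base case, every symbol of the first $v_\ell$ must be pushed, yielding queue $c_1 x^\ell y^\ell c_2$ after position~$2\ell+2$. The plan is then to show that the second $v_\ell$ is matched symbol-by-symbol against this queue content, emptying the queue; the remainder of the computation on $V_{\ell-1}$ (with empty queue) is then determined uniquely by the inductive hypothesis.

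The central difficulty lies in ruling out the ``push'' option when reading the second $v_\ell$. The principal alternative is pushing the initial $c_1$ of the second $v_\ell$. By the front-remains-$c_1$ property, such a push forces all of the second $v_\ell$ to be pushed as well, yielding a queue $c_1 x^\ell y^\ell c_2 c_1 x^\ell y^\ell c_2$ which must subsequently be drained by reading $V_{\ell-1}$. To derive a contradiction, one analyzes the drain: the first $4\ell + 4$ input positions used for matching must form a subsequence of $V_{\ell-1}$ spelling $c_1 x^\ell y^\ell c_2 c_1 x^\ell y^\ell c_2$ (in FIFO order), while the remaining input positions split into pushes and later matches whose symbol-sequences coincide, with each push preceding its corresponding match. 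Using the block structure of $V_{\ell-1}$ (whose largest $x$- and $y$-runs have length $\ell - 1$, one less than the $\ell$ required by the queue's $x^\ell$ and $y^\ell$ segments), one shows no such combined partition is consistent with the FIFO queue discipline. Other pushing choices during the second $v_\ell$ (pushing a later symbol after matching the initial $c_1$) are handled by analogous, typically simpler, arguments: any such deviation leaves an extraneous symbol on the queue that cannot subsequently be matched by $V_{\ell-1}$.

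The main obstacle is the combinatorial analysis of the previous paragraph---showing that draining a queue of $v_\ell^2$ via $V_{\ell-1}$ is impossible. While the required subsequence for the first $4\ell+4$ matches may sometimes exist in $V_{\ell-1}$ (for large~$\ell$ the character counts are sufficient), the additional temporal constraints imposed by the FIFO queue must combine with the decreasing block sizes of $V_{\ell-1}$ to force a contradiction. I expect this will require a secondary inductive argument tracking the evolution of the queue through the drain, or an invariant showing that the ``mismatch in block sizes''---queue content $v_\ell$ against input $v_{\ell-1}, v_{\ell-2}, \ldots, v_1$---always leaves residual symbols at the end of input.
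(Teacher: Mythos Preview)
Your induction on~$\ell$ and the observation that the first~$v_\ell$ must be pushed entirely are correct, and the paper's proof also proceeds by induction on~$\ell$. However, your handling of the inductive step has a genuine gap at exactly the point you flag as the ``main obstacle.''

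First, the case analysis of deviations while reading the second~$v_\ell$ is incomplete. You treat carefully only the case where the initial~$c_1$ of the second~$v_\ell$ is pushed, and dismiss the remaining cases as ``analogous, typically simpler.'' They are not. If the two~$c_1$'s match but only $p<\ell$ of the $x$'s match, the queue becomes $x^{q} y^\ell c_2 x^{q} y^\ell c_2$ with $q=\ell-p$; if all $x$'s match but not all $y$'s, one gets $y^{r} c_2 y^{r} c_2$; and if everything matches except the final~$c_2$, one gets $c_2 c_2$. None of these is covered by ``an extraneous symbol that cannot subsequently be matched by~$V_{\ell-1}$'': for instance, from queue $c_2 c_2$ one can match both~$c_2$'s against the first two~$c_2$'s of~$V_{\ell-1}$, leaving a nontrivial residual queue to be analyzed further. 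So the claim that these cases are simpler is unjustified.

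Second, for the principal case (queue $= v_\ell v_\ell$, input $= V_{\ell-1}$), you correctly identify that a run of $\ell$~$x$'s on the queue must be matched by input whose $x$-blocks have length at most~$\ell-1$, but you do not turn this into an argument. The paper's key idea, which your sketch is missing, is to isolate a self-perpetuating invariant: whenever the queue contains two \emph{full} $x$-blocks (respectively $y$-blocks), it must later contain two full $y$-blocks (respectively $x$-blocks). The reason is that a full block~$x^m$ can only be matched by symbols from strictly shorter later $x$-blocks, hence from at least two distinct $x$-blocks, and the $y$-block sandwiched between them is forced (by non-nesting) to be pushed in full; the second full $x$-block on the queue then produces a second full $y$-block. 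This yields an infinite regress and hence a contradiction. Establishing this requires a preliminary observation you do not mention: no~$x$ (or~$y$) can match another symbol from the \emph{same} block, which follows from the parity of~$c_1$'s (or~$c_2$'s) on either side of each block together with non-nesting. Once you have that lemma, all the deviation cases above are handled uniformly by showing that each of them puts two full $x$-blocks or two full $y$-blocks on the queue.
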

As a side remark,
it is interesting to note that Figure~\ref{fig:matchingexample}
illustrates that Theorem~\ref{thm:Vsolution} would not hold
if  the $v_j$'s were instead defined to equal $c_1 x^j c_2$
with the $y$'s omitted.
Theorem~\ref{thm:Vsolution} follows from the next three lemmas.
\begin{dfn}
Each subword $x^j$ or~$y^j$ shown in the definition
of~$V$ in~(\ref{eq:V}) is called an {\em $x$-block} or
a {\em $y$-block}, respectively.  We also refer to them
as {\em full $x$-blocks} or {\em full $y$-blocks} after
they have been pushed onto the queue to emphasize that the
complete subword $x^j$ or $y^j$ has been pushed onto the queue
without any $x$ or~$y$ from the block being matched.
\end{dfn}
\begin{lem}\label{lem:xysameblock}
If $C$ is an accepting computation of~$V$,
then $C$ does not match any~$x$ (resp.,~$y$) with another
symbol from the same $x$-block (resp.~$y$-block).
\end{lem}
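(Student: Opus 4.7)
The plan is to argue by contradiction via a queue-state parity argument. Suppose, for contradiction, that positions $p<q$ lie in the same $x$-block of~$V$ and are matched by the computation~$C$; the $y$-block case will be entirely analogous. Let $Q_p$ denote the queue contents immediately before position~$p$ is read.

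The main step is to show that $Q_p$ must consist entirely of $x$'s. All symbols strictly between $p$ and~$q$ are $x$'s (they lie in the same block), so while these ``inner'' $x$'s are processed the only permitted pop actions remove $x$'s from the front of the queue. I would write $Q_p = x^a R$ with $a \ge 0$ taken maximal, so $R$ is either empty or begins with a non-$x$ symbol. After pushing the $x$ at~$p$ the queue is $x^a R x$, and any inner pop removes the leading symbol of the $x^a$ prefix: nothing in~$R$ can ever be popped by an $x$-input, nor can the $x$ at~$p$ itself until the entirety of $x^a R$ has been cleared. Hence for the $x$ pushed at~$p$ to reach the front in time to be popped at~$q$, the prefix $x^a R$ must vanish, forcing $R$ to be empty and $Q_p = x^a$.

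A parity count then yields a contradiction. Suppose $p$ lies in the $x$-block of the $k$-th $v$-block of~$V$ (numbering the $2\ell$ many $v$-blocks from left to right). Then the prefix of $V$ ending just before~$p$ contains exactly $k$ occurrences of $c_1$ (one from each of the first $k$ many $v$-blocks) and exactly $k-1$ occurrences of $c_2$ (from the first $k-1$ many $v$-blocks, since block~$k$'s $c_2$ lies past position~$p$). For $Q_p$ to contain no non-$x$ symbol, every $c_1$ and every $c_2$ appearing before~$p$ must have been matched with a partner also appearing before~$p$ (so that both ends of the match have already been popped from the queue). This forces both $k$ and $k-1$ to be even, which is impossible.

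The $y$-block case goes through unchanged: if $p$ lies in the $y$-block of the $k$-th $v$-block, the same counts of $c_1$'s ($k$ of them) and $c_2$'s ($k-1$ of them) still appear before~$p$, since the $y$-block still precedes block~$k$'s $c_2$, and the same parity obstruction applies. I expect the main obstacle to be making the first step airtight --- carefully tracking which positions of the queue can be reached by pops during the inner phase and ruling out the possibility that the $x$ at~$p$ is overtaken by an $x$ coming from~$Q_p$ --- since once that structural claim is pinned down, the parity count is essentially a one-liner.
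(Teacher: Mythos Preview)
Your proof is correct. Both you and the paper hinge on the same parity observation: before any $x$- or $y$-block in $V$, exactly one of the counts ``number of $c_1$'s'' and ``number of $c_2$'s'' is odd. But the mechanisms differ.

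The paper argues directly via the non-nesting property of the matching graph: since (say) the number of $c_1$'s on one side of the block~$\beta$ is odd, and the total number of $c_1$'s in $V$ is even, some $c_1$ before~$\beta$ must be matched with a $c_1$ after~$\beta$; that edge spans~$\beta$, so non-nesting immediately forbids any edge with both endpoints inside~$\beta$. This is a two-line argument with no queue bookkeeping at all.

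Your route instead works through the queue semantics: you show that if $p$ and $q$ in the same block are matched, then $Q_p$ must consist purely of $x$'s (since the $x$ pushed at~$p$ can only reach the front by clearing everything ahead of it, and inner $x$-inputs cannot clear a non-$x$ front symbol), and then the parity count rules out all $c_1$'s and $c_2$'s having been paired off entirely to the left of~$p$. This is sound, and the queue-state formulation may feel more operational, but it does more work than needed: your entire first step is, in effect, re-deriving in queue language the statement ``not all $c$'s to the left of~$\beta$ can match among themselves,'' which the paper gets for free from the graph picture. The ``overtaking'' worry you flag is a non-issue in a FIFO queue---pushed inner $x$'s go behind $x_p$ and cannot leapfrog it---so no extra care is required there.
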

\begin{proof}
$V$ contains an even number of~$c_1$'s and
an even number of~$c_2$'s.
Consider some $x$- or $y$-block~$\beta$ in~$C$.  There is
either an odd number of $c_1$'s before (and therefore, after)
$\beta$ in~$V$, or an odd number of~$c_2$'s before (and
after) $\beta$ in~$V$.  If there are, say, odd numbers of~$c_1$'s
then some $c_1$ before~$\beta$ must match some~$c_1$ after~$\beta$
during~$C$.  The non-nesting condition now implies that
no two symbols
in~$\beta$ can be matched.
\end{proof}

\begin{lem}\label{lem:Vexiststwofull}
Suppose $C$ is an accepting computation for~$V$,
and $C$ does not completely match the first subword~$v_\ell$
of~$V$ with the second $v_\ell$ of~$V$
(i.e., at least one
symbol from the second~$v_\ell$
of~$V$ is pushed onto the queue).
Then there is a point in~$C$ where the queue contains
either two full $x$-blocks or two full $y$-blocks.
\end{lem}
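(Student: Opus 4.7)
The plan is to locate the first symbol of the second $v_\ell$ that $C$ pushes onto the queue---which exists by hypothesis---and case-split on where that symbol lies. A preliminary observation, using Lemma~\ref{lem:xysameblock}, is that no symbol of the first $v_\ell$ can match anything during its reading (the initial $c_1$ faces an empty queue; the $x$'s and $y$'s cannot match within their own block, and the other symbols on top at the time are not of their type); so the queue just after first $v_\ell$ is exactly $c_1 x^\ell y^\ell c_2$, with $x^\ell$ and $y^\ell$ both already full blocks.

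I would handle the easy cases first. If the $c_1$ of second $v_\ell$ is pushed, the $\ell$ subsequent $x$-inputs all face $c_1$ at the queue front and must each be pushed, creating a second full $x^\ell$-block. If instead the first push is some $x$ inside the second $x$-block, let $1\le a\le\ell$ count the total pushes in that block; tracking push/match moves shows the queue is $x^a y^\ell c_2 x^a$ just after the block. For $a=\ell$ this is already two full $x^\ell$-blocks, and for $a<\ell$ the front $x$ forces the entire second $y$-block to be pushed, yielding $x^a y^\ell c_2 x^a y^\ell$ with two full $y^\ell$-blocks. Analogously, if the first push is a $y$ and all $\ell$ $y$'s of the second $y$-block are pushed (``$s=\ell$''), the queue becomes $y^\ell c_2 y^\ell$.

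The main obstacle is the two ``residue'' cases: either the second $y$-block has $1\le s<\ell$ pushes, or only the final $c_2$ of the second $v_\ell$ is pushed (that $c_2$ is indeed forced to be pushed, for otherwise first and second $v_\ell$ would be completely matched, contradicting the hypothesis). In both sub-cases, just after reading second $v_\ell$ the queue's front character is either $y$ or $c_2$, and never $c_1$ or $x$. The key observation is that such front-residue can only be drained by $y$- or $c_2$-inputs, so every $c_1$ and every $x$ in the subsequent $v_{\ell-1}$-pair is forced to be pushed, producing full $x^{\ell-1}$-blocks for both copies of $v_{\ell-1}$. Furthermore, between those two pushes no $x$ is ever popped from anywhere on the queue (no other $x$-input intervenes, and the front never becomes an $x$), so the first $v_{\ell-1}$'s $x^{\ell-1}$ remains intact when the second joins the queue, giving two full $x^{\ell-1}$-blocks simultaneously. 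The remaining verification is that the $y$- or $c_2$-residue at the front persists through the intervening $y^{\ell-1}$ and $c_2$ of first $v_{\ell-1}$; this is a short counting argument, using that the number of $y$-matches popped from the front cannot exhaust the $y$-residue before a $c_2$ blocks further $y$-matching.
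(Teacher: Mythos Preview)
Your argument is correct and uses the same three-way case split as the paper: according to whether the $c_1$, the $x$-block, or the $y$-block/$c_2$ of the second~$v_\ell$ first fails to match.  Where you differ is in the style of reasoning inside the cases.  You track the queue front explicitly and argue that certain inputs cannot match because the front symbol has the wrong type; the paper instead reasons with the global non-nesting property of the matching.  For example, in your $a<\ell$ sub-case you observe that the leftover $x^a$ at the front forces every $y$ of the second $y$-block to be pushed, whereas the paper notes that some~$x$ of the \emph{first} $x^\ell$ must match an~$x$ in a later block $x^j$ with $j<\ell$, and then non-nesting forces both intervening $y^\ell$-blocks onto the queue.  Similarly, in the residue case you count that the input segment $y^{\ell-1}c_2c_1$ supplies only one~$c_2$, too few to drain the two $c_2$'s of the residue $y^sc_2y^sc_2$, so the front stays in $\{y,c_2\}$; the paper instead observes that the second enqueued~$c_2$ must match the fourth~$c_2$ of~$V$ or later, so by non-nesting both $x^{\ell-1}$-blocks are pushed.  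Both routes are valid; the paper's non-nesting viewpoint is a bit slicker in the last case, while your queue-state computation is more concrete and avoids invoking the bipartite-matching picture.
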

\begin{proof}
The proof splits into cases depending on how $C$ starts off.
For the first case, suppose the first $c_1$ of~$V$ does not
match the second~$c_1$ of~$V$.  By the non-nesting condition,
this implies that the subword $x^\ell y^\ell c_2 c_1 x^\ell y^\ell$
is pushed onto the queue.  This puts two full $x$-blocks and two full
$y$-blocks on the queue, so the lemma holds in this case.
So, henceforth assume that the first $c_1$ matches the second~$c_1$.

Now suppose the first $x$-block $x^\ell$ does not completely
match the second~$x^\ell$.  Therefore, some of the $x$'s in the
first $x^\ell$ match symbols from some $x^j$ with $j<\ell$. This
$x$-block~$x^j$ comes {\em after} the first two $y$-blocks
(which equal $y^\ell$), so by the non-nesting condition,
these two $y$-blocks are on the queue by the time the
algorithms consumes the $x$-block~$x^j$.  So the lemma holds
in this case as well.  Assume henceforth that the
first $c_1 x^\ell$ is completely matched
with the second $c_1 x^\ell$ by~$C$.

Finally, suppose that the first subword~$y^\ell c_2$ does not
completely match the second~$y^\ell c_2$.  In this case,
we claim that, after consuming the second~$c_2$,
$C$'s queue will contain $y^m c_2 y^m c_2$.  To see this note
that either the two $y^\ell$'s completely match (so $m=0$) and then
the $c_2$'s are not matched by assumption, or the two $y^\ell$'s
do not completely match (so $m>0$) and then the $c_2$'s must be pushed
to the queue since they cannot be matched while a~$y$
is at the top of the queue.  At any rate, the queue contains
two~$c_2$'s once the second~$c_2$ is consumed.  By the non-nesting
property, the second~$c_2$
on the queue must match the {\em fourth}~$c_2$ of~$V$ or a
later~$c_2$ of~$V$. Therefore, the two
$x$-blocks $x^{\ell-1}$ that come prior to the fourth~$c_2$
are pushed
onto the queue, and the lemma holds again in this case.
\end{proof}

\begin{lem}\label{lem:Vnexttwoblocks}
If $C$ is an accepting computation for~$V$ and
at some point in~$C$ the queue
contains two full $x$-blocks (respectively, contains two full
$y$-blocks), then there is a later point at which the queue contains
two full $y$-blocks (respectively, contains two full $x$-blocks).
\end{lem}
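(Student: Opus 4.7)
The plan is to exploit a chunk decomposition of the queue. At any moment during the computation of $V$, the queue can be written as a concatenation $C_1 C_2 \cdots$, where each chunk $C_k$ is the surviving suffix (the pushed-but-not-yet-matched symbols) of a single $v_{j_k}$'s contribution to the queue, and the chunks appear in the order that the corresponding $v_j$'s were consumed. Each chunk has the form $[c_1]?\, x^{\bullet}\, y^{\bullet}\, [c_2]?$, preserving the character order of its source $v_{j_k} = c_1 x^{j_k} y^{j_k} c_2$. I will treat the case of two full $x$-blocks; the case of two full $y$-blocks is entirely analogous with the roles swapped.

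Choose $t_0$ to be the earliest moment with two full $x$-blocks, namely immediately after the last $x$ of $x^{p_2}$ is pushed. Let $C_a$ and $C_b$ be the chunks containing the full $x^{p_1}$ and full $x^{p_2}$, respectively, so $C_a$ comes from some $v_{p_1}$, $C_b$ comes from some $v_{p_2}$, and $a<b$. The input immediately after $t_0$ starts with $y^{p_2}$, the $y$-block of the same $v_{p_2}$ that immediately follows $B_2$ in $V$. Define $t_1$ to be the moment just after $y^{p_2}$ has been fully consumed. The target is to show that at $t_1$ the queue holds two full $y$-blocks, namely $y^{p_1}$ inside $C_a$ and the freshly pushed $y^{p_2}$ inside $C_b$.

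The proof then reduces to three claims: (i) during the past consumption of $y^{p_1}$, all $p_1$ of its symbols were pushed onto $C_a$, so $C_a$'s $y$-portion equals $y^{p_1}$; (ii) during the consumption of $y^{p_2}$ just after $t_0$, all $p_2$ of its symbols are again pushed, so $C_b$'s new $y$-portion equals $y^{p_2}$; and (iii) the $y^{p_1}$ inside $C_a$ is still intact at $t_1$. Claim (iii) is almost automatic: the only way to shrink $C_a$'s $y$-portion is to pop its front $y$, but the full $x^{p_1}$-block of $C_a$ sits between the queue front and $C_a$'s $y$-portion, and during the consumption of $y^{p_2}$ the input supplies only $y$'s, which cannot pop any $x$ from $x^{p_1}$.

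The technical heart is claims (i) and (ii). Both amount to showing that the queue front is not $y$ at the relevant moments. By the chunk decomposition, any $y$ at the queue front must come from the $y$-portion of some earlier chunk $C_{k'}$. Here a case analysis is required: if no such $y$-prefix exists, claims (i) and (ii) hold directly; if a $y$-prefix does exist, then those front $y$'s themselves form (or contain) a full $y$-block coming from $C_{k'}$, and this block --- together with either the newly pushed $y^{p_2}$ or a corresponding block from another earlier chunk --- already supplies the two full $y$-blocks demanded by the lemma, possibly at a slightly different choice of $t_1$ within the interval before $x^{p_1}$ starts being popped. The principal obstacle is combining these sub-cases uniformly across all positions that $B_1$ and $B_2$ can occupy in $V$ and across all queue configurations consistent with two full $x$-blocks at $t_0$; the accepting-computation hypothesis is used throughout, since non-nesting plus the fixed total character counts of $V$ exclude the otherwise-problematic configurations in which neither the intended $y^{p_1}, y^{p_2}$ pair nor a substitute pair from earlier chunks would be available.
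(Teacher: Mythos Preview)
Your approach diverges from the paper's and has a genuine gap precisely where you flag ``the technical heart.'' Claims (i) and (ii) assert that the $y$-blocks $y^{p_1}$ and $y^{p_2}$ immediately following the two full $x$-blocks in~$V$ are pushed in their entirety, but nothing forces this: when $y^{p_1}$ (or $y^{p_2}$) is being read, the queue front may be a $y$, and then some of those $y$'s are matched rather than pushed. Your fallback---that any $y$'s at the front ``themselves form (or contain) a full $y$-block''---is false. The front $y$'s come from some earlier chunk~$C_{k'}$, but only the \emph{pushed} $y$'s of its source $v_{j_{k'}}$ are present, and some of those $y$'s may already have been matched against still-earlier $y$'s when $v_{j_{k'}}$ was read. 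Concretely, with $\ell=3$: push all of the first $v_3$; match $c_1x^3y$ of the second $v_3$ and push its remaining $y^2c_2$; push $c_1x^2$ of the first $v_2$ and then match its $y^2c_2$ against the front; finally push $c_1x^2$ of the second $v_2$. The queue is now $y^2c_2c_1x^2c_1x^2$, with two full $x$-blocks, yet $y^{p_1}$ (the $y^2$ of the first $v_2$) was \emph{entirely matched}, and the $y^2$ at the front is only a fragment of the second $v_3$'s $y^3$. This prefix of course cannot extend to an accepting computation, but that is exactly what the lemma is being used to show, so appealing vaguely to ``the accepting-computation hypothesis \ldots\ exclude[s] the otherwise-problematic configurations'' is circular.

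The paper's argument avoids this by looking in a different place for the two full $y$-blocks. It does not use the $y$-blocks adjacent to $x^{p_1},x^{p_2}$ in~$V$ at all. Instead it runs the computation forward until the earlier full block $x^m$ has been completely \emph{popped}. Because the second full block $x^j$ was already on the queue when popping began, every input $x$-block available to match $x^m$ lies after $x^j$'s position in~$V$ and hence has size strictly less than~$m$; thus at least two input $x$-blocks are consumed, and the $y$-block sandwiched between them must (since the matched queue positions are adjacent $x$'s) be pushed in full. The same size argument applied to $x^j$ yields a second full $y$-block, and both are still on the queue once $x^j$ is popped. The key idea you are missing is this size comparison between $x^m$ and the later $x$-blocks of~$V$, which is what genuinely forces full $y$-blocks onto the queue.
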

\begin{proof}
Suppose $C$ has two full $x$-blocks $x^m$ and then $x^j$ in the queue.
Note $j\le m$.
Let the computation continue until $x^m$ has been
matched, and then until $x^j$ has been matched.
The symbols of $x^m$ are matched by symbols from
$x$-blocks $x^s$ that have $s<m$ (since the block~$x^j$
was intervening).  Therefore, $x^m$'s symbols must match
$x$'s from at least two distinct $x$-blocks.  Between
these two $x$-blocks there is a $y$-block, and by the non-nesting
condition this $y$-block is pushed onto the queue in its
entirety.  Similarly the $x$-block $x^j$ is matched
against symbols from at least two distinct $x$-blocks,
and again there is a $y$-block between those two $x$-blocks
that is entirely pushed onto the queue.  Therefore, once the
$x^j$ is matched, there are
at least two full $y$-blocks in the queue.

The dual argument works with $x$
and $y$ interchanged.
\end{proof}
We can now prove Theorem~\ref{thm:Vsolution}:
\begin{proof}
The proof is by induction on~$\ell$.  The base case $\ell=1$
is trivial.  Suppose $\ell>1$.  If
an accepting computation~$C$ matches the first
two subwords $c_1 x^\ell y^\ell c_2$ against each other
completely, then the rest of the computation~$C$ is
an accepting computation on the rest of~$V$,
namely $V$ minus these first
two subwords.
By the induction hypothesis, the latter accepting
computation matches each pair of subwords $c_1 x^j y^ j c_2$,
and the theorem holds.
Otherwise, if the first two subwords $c_1 x^\ell y^\ell c_2$
of~$V$ are not completely matched by~$C$,
then Lemma~\ref{lem:Vexiststwofull} states that $C$ contains
some point where its queue contains either two full $x$-blocks
or two full $y$-blocks.  Lemma~\ref{lem:Vnexttwoblocks}
then implies that $C$'s queue must contain two full $x$- or
$y$-blocks infinitely often, which is a contradiction.
\end{proof}

That completes the proof of Theorem~\ref{thm:Vsolution},
and thereby the proof of Theorem~\ref{thm:Vcondition},
giving us the V-Condition that was needed for the proof
of Theorem~\ref{thm:squareNPcomplete}.

\bibliographystyle{siam}
\bibliography{shuffle}

\end{document}